\documentclass [11pt,twoside,a4paper]{article}
\usepackage{amsfonts}
\usepackage{amsthm}
\usepackage{amsmath}
\usepackage{amstext}
\usepackage{amssymb}
\usepackage{mathrsfs}
\usepackage{amscd}
\usepackage{xypic}
\usepackage{graphicx}          %����ͼ�������ĺ���
\usepackage{fancybox}          %���Ǻ��������ĺ���
\usepackage{color}             %������ɫ����
\usepackage{fancyhdr}
\usepackage[hang,footnotesize]{caption2}  %�й�ͼ�α����ĺ���

\def\mathcal{\mathscr}
\newfont{\aaa}{cmb10 at 19pt}
\newfont{\bbb}{cmb10 at 11pt}

\pagestyle{myheadings}

\newcommand{\beq}{\begin{equation}}
\newcommand{\eeq}{\end{equation}}
\newcommand{\bey}{\begin{eqnarray}}
\newcommand{\eey}{\end{eqnarray}}
\newcommand{\beyy}{\begin{eqnarray*}}
\newcommand{\eeyy}{\end{eqnarray*}}

\usepackage{algorithm,algorithmic}
\usepackage{verbatim}
\usepackage{subfigure}

\newtheorem{theorem}{Theorem}[section]

\newtheorem{corollary}{Corollary}[section]

\newtheorem{definition}{Definition}

\newcommand{\mcp}{\mathcal{P}}
\newcommand{\mcl}{\mathcal{L}}
\newcommand{\mbr}{\mathbb{R}}
\newcommand{\rank}{\mathrm{rank}}
\newcommand{\new}{{\mathrm{new}}}

\newcommand{\st}{\mathrm{s.t.}}

%%%%%%%%%%%%%%%%%%%%%%%%%%%%%%%%%%%%%%%%%%%%%%%%%%%%
%ҳ����ʽ
%%%%%%%%%%%%%%%%%%%%%%%%%%%%%%%%%%%%%%%%%%%%%%%%%%%

%ҳ�ߵ���ü
%�о�
%ҳü�����ĵľ���
\setlength{\textwidth}{132mm}%���Ŀ�
\setlength{\textheight}{20.5cm}%���ĸ�
\setlength{\headheight}{1cm} %%ҳü�ĸ߶�\setlength{\topmargin}{-0.1cm}
\setlength{\oddsidemargin}{13.5mm}%����ҳ�����߾�
\setlength{\evensidemargin}{13.5mm}%ż��ҳ�����߾�
 \setlength{\parskip}{0mm}%��Ȼ�μ��ľ���

%ҳü�趨%%%%%%%%%%%%%%%%%%%%%%%%%%%%%%%%%%%%%%%%%%%%%%%%%%
 \makeatletter
\def\@evenhead{
   \vbox{\hbox to \textwidth
{}{\hspace{0mm}{\footnotesize \thepage}}{\hspace{7cm}
   {\footnotesize {Y. Xu, W. Yin, Z. Wen, and Y. Zhang }}}
  \protect\vspace{1truemm}\relax
   \hrule depth0pt height0.15truemm width\textwidth
   }}
   \def\@evenfoot{}
\def\@oddhead{
    \vbox{\hbox to \textwidth
  {{\hspace{0cm}{\footnotesize An Alternating Direction Algorithm for Matrix Completion with Nonnegative Factors}
  \hfill{\footnotesize \thepage}}\hspace{0mm}}{}
   \protect\vspace{1truemm}\relax
   \hrule depth0pt height0.15truemm width\textwidth
  }}
  \def\@oddfoot{}
\makeatother
%ҳü�趨%%%%%%%%%%%%%%%%%%%%%%%%%%%%%%%%%%%%%%%%%%%%%%%%%%

%��ע�趨
%%%%%%%%%%%%%%%%%%%%%%%%%%%%%%%%%%%%%%%%%%%%%%%%%%%%%%%%%%%
%��ʼ����
%%%%%%%%%%%%%%%%%%%%%%%%%%%%%%%%%%%%%%%%%%%%%%%%%%%%%%%%%%%

\begin{document}

\title{An Alternating Direction Algorithm for Matrix Completion with Nonnegative Factors
}
\date{}

\author{
Yangyang Xu$^1$,
\quad Wotao Yin$^1$,
\quad Zaiwen Wen$^2$,
\quad Yin Zhang$^1$ \\[.1in]
$^1${\small Department of Computational
and Applied Mathematics, Rice University}\\
$^2${\small Department of Mathematics and Institute of Natural Sciences,}\\
{\small Shanghai Jiaotong University}
}

\maketitle

\textbf{Abstract.} This paper introduces an algorithm for the nonnegative matrix factorization-and-completion problem, which aims to find nonnegative low-rank matrices $X$ and $Y$ so that the product $XY$ approximates a nonnegative data matrix $M$ whose elements are partially known (to a certain accuracy). This problem aggregates two existing problems: (i) nonnegative matrix factorization where all entries of $M$ are given, and (ii) low-rank matrix completion where nonnegativity is not required. By taking the advantages of both nonnegativity and low-rankness, one can generally obtain superior results than those of just using one of the two properties. We propose to solve the non-convex constrained least-squares problem using an algorithm based on the classic alternating direction augmented Lagrangian method. Preliminary convergence properties of the algorithm and numerical simulation results are presented. Compared to a recent algorithm for nonnegative matrix factorization, the proposed algorithm produces factorizations of similar quality using only about half of the matrix entries. On tasks of recovering incomplete grayscale and hyperspectral images, the proposed algorithm yields overall better qualities than those produced by two recent matrix-completion algorithms that do not exploit nonnegativity.

\textbf{Keywords.}  nonnegative matrix factorization, matrix completion,
alternating direction methd, hyperspectral unmixing

\textbf{MSC.} 15A83, 65F30, 90C26, 90C90, 94A08

\section{Introduction}
This paper introduces an algorithm for the following problem:
\begin{definition}[Nonnegative matrix factorization / completion (NMFC)]\label{nmfc}
Given samples $M_{i,j}$, $(i,j)\in\Omega\subset \{1,\ldots,m\}\times \{1,\ldots,n\}$, of a \emph{nonnegative rank-$r$} matrix $M\in\mathbb{R}^{m\times n}$, find \emph{nonnegative} matrices $X\in\mathbb{R}^{m\times q}$ and $Y\in\mathbb{R}^{q\times n}$ such that $\|M-XY\|_F$ is minimized.
\end{definition}
Note that $q$ is not necessarily set to equal $r$. Firstly, not all rank-$r$ nonnegative matrices have nonnegative factors of size $r$. For some of them, the available size of  nonnegative factors is strictly greater than $r$. Secondly, when $M$ is approximately low-rank, i.e. the singular values of $M$ have a fast-decaying distribution, one often sets  $q$ to be the estimated rank or the  number of significant singular values. This resulting problem can be called \emph{approximate NMFC}. In general, depending on  data and applications, $q$ can be either equal, less than, or greater than $r$. 

NMFC is a combination of nonnegative matrix factorization (NMF) --- which finds nonnegative factors of a nonnegative matrix given all of its entries --- and low-rank matrix completion (LRMC) --- which recovers $M$ from an incomplete set of its entries without assuming nonnegativity. Mathematically, given a matrix $M\in\mathbb{R}^{m\times n}$ and  $q>0$, we present the three problems with the following models
\begin{eqnarray}\label{eqnmf}
\mbox{NMFC:} && \min_{X,Y}\left\{\|\mcp_\Omega(XY - M)\|_F^2:
\begin{array}{l} X\in\mathbb{R}^{m\times q}, Y\in\mathbb{R}^{q\times n}, \\
X_{ij}\ge 0, Y_{ij}\ge 0,\forall~i,j
\end{array}\right\},\\
\label{nmf}
\mbox{NMF:} && \min_{X,Y}\left\{\|XY - M\|_F^2 : \begin{array}{l}X\in\mathbb{R}^{m\times q}, Y\in\mathbb{R}^{q\times n},\\ X_{ij}\ge 0, Y_{ij}\ge 0,\forall~i,j\end{array}\right\},\\
\label{lrmc}
\mbox{LRMC:} && \min_Z\left\{\rank(Z): \begin{array}{l}Z\in\mathbb{R}^{m\times n}, \\ \mcp_\Omega (Z-M) = 0
\end{array}\right\},
\end{eqnarray}
where $\Omega$ indexes the known entries of $M$ and $\mcp_\Omega(A)$ returns a copy of $A$ that zeros out the entries not in $\Omega$. Note that each of the three problems has other models. Examples include weighted least-squares for NMF and NMFC and nuclear-norm minimization for LRMC. While \eqref{eqnmf} and \eqref{nmf} return $X Y$ up to a fixed rank $q$, \eqref{lrmc} seeks for a \emph{least-rank} recovery $Z$. It is well known that models \eqref{eqnmf}--\eqref{lrmc} are non-convex and generally difficult to solve. A recent advance for \eqref{lrmc} is that if $M$ is low-rank and the samples $\Omega$ satisfy the so-called incoherence property and are sufficiently large, then a convex problem based on nuclear norm minimization can exactly recover $M$ (see the pioneering work \cite{fazel-PhD-Thesis2002}, as well as recent results \cite{recht2010guaranteed, candes2009exact, wright2011robust, candes2010power}).

We are interested in NMFC since it complements NMF and LRMC. %, and it is useful when the underlying matrix has both low rank and nonnegative factors.
NMF has been widely used in data mining such as text mining, dimension reduction and clustering, as well as spectral data analysis. It started to  appear in \cite{paatero1994positive, paatero1997least, paatero1999multilinear} and has become popular since the publication of \cite{lee1999learning} in 1999. More information on NMF can be found in the survey paper \cite{berry2007algorithms}, as well as books \cite{cichocki2008advances, separationnonnegative}. Unlike NMF, NMFC assumes that the underlying matrix is incompletely sampled; hence, it leads to saving of sampling time and storage (for data such as images) and  has broader applicability.
On the other hand, LRMC has recently found a large number of applications including \emph{collaborative filtering}, which is used by Netflix to infer individual preference from an incomplete set of user preferences \cite{goldberg1992using}, \emph{global positioning}, which discovers the positions of nodes in a network from incomplete pair-wise distances \cite{biswas2006semidefinite}, \emph{system identification and order reduction}, which recovers or reduces the dimension of the state vectors of a linear time-invariant state-space model \cite{liu2009interior}, as well as the \emph{background subtraction} and \emph{structure-from-motion} problems in computer vision. A rank-$q$ matrix $M$ can be written as $M=XY$ for matrices $X$ with $q$ columns and $Y$ with $q$ rows. When $X$ and $Y$ are known to be nonnegative \emph{a priori}, empirical evidence given in Section \ref{sec:numerical} shows that imposing nonnegativity on the factors improves the recovery quality. In particular, in certain applications such as hyperspectral unmixing, the factors are nonnegative due to their physical nature, so these applications will benefit from NMFC. To summarize, NMFC combines NMF and LRMC, and NMFC is useful when the underlying matrix has both low rank and nonnegative factors.

\subsection{Related Algorithms}
There are two algorithms that have been widely used for NMF: the alternating least squares (ALS) in \cite{paatero1994positive} and multiplicative updating (Mult) in \cite{lee2001algorithms}. The former algorithm alternatively updates factor matrices $X$ and $Y$ to reduce the least-squares cost $\|XY - M\|_F^2$. The closed-form updates are given as
\begin{eqnarray*}
X_{\new} &\gets & \max\{0,MY^\top(YY^\top)^\dag\},\\
Y_\new &\gets & \max\{0,(X^\top X)^\dag X^\top M\},
\end{eqnarray*}
where $\max\{\cdot,\cdot\}$ is applied component-wise and $\dag$ denotes pseudo-inverse. The algorithm Mult has much cheaper multiplicative updates
\begin{eqnarray*}
(X_\new)_{ij} & \gets & X_{ij}(M Y^\top)_{ij}/(XYY^\top +\epsilon)_{ij}, ~\forall~i,j,\\
(Y_\new)_{ij} & \gets & Y_{ij}(X^\top M)_{ij}/(X^\top X Y + \epsilon)_{ij}, ~\forall~i,j,
\end{eqnarray*}
which do not involve matrix inversion. Starting from a nonnegative initial
matrix $Y$, $X$ and $Y$ remain nonnegative during the iterations of Mult. The
algorithm presented in this paper also applies to NMF if a complete sample set
$\Omega$ is used. The resulting algorithm, which has been studied in paper
\cite{zhang2010admnmf}, is simpler and compares favorably with ALS and Mult in
terms of both speed and solution quality. In fact, the proposed algorithm in
this paper extends the work in \cite{zhang2010admnmf}, and both algorithms are
based on the algorithm of alternating direction method of multipliers (ADM)
\cite{glowinski1975,
gabay1976dual,BertsekasTsitsiklis-book-parallel,WangYangYinZhang2008,YangZhangYin2008,AltSDP-WenGoldfarbYin-2009}. 
Likewise, we can extend the algorithms ALS and Mult to solving NMFC. Extending
ALS is as straightforward as adopting the least-square cost
$\|\mcp_\Omega(XY-M)\|_F^2$ and deriving the corresponding updates. One simple
approach to extend Mult is to replace $M$ by $\tilde M\in\mathbb{R}^{m\times
n}$, defined component-wise by $\tilde M_{ij} =
M_{ij}\mathbf{1}_{(i,j)\in\Omega}$, i.e. $\tilde M$ is a copy of $M$ with the
unsampled entries set to 0. Drawing conclusions based on the comparative results
in \cite{zhang2010admnmf}, we believe that ADM based methods deliver
higher-quality solutions in shorter times.

There are also several algorithms for LRMC. Since LRMC can complete a matrix and return factors that happen to be (approximately) nonnegative, we shall briefly review a few well-known LRMC algorithms and compare them to the proposed algorithm. Singular value thresholding (SVT) \cite{cai2008singular} and fixed-point shrinkage (FPCA) \cite{ma2009fixed} are two well-known algorithms. SVT applies the linearized Bremgan iterations \cite{Yin-Osher-Goldfarb-Darbon-07} to the unconstrained nuclear-norm model of LRMC:
\begin{equation}\label{nuc}
\min \lambda\|Z\|_* + (1/2)\|\mcp_\Omega(Z-M)\|_F^2.
\end{equation}
FPCA solves the same model using iterations based on an iterative shrinkage-thresholding algorithm \cite{Hale-Yin-Zhang-07-theory}. Furthermore, classic alternating direction augmented Lagrangian methods have been applied to solving \eqref{nuc} or its variant with constraints $\mcp_\Omega(Z-M)=0$ in \cite{Goldfarb-Ma-Wen-Allerton-09, Yang-Yuan-10}. The algorithm LMaFit \cite{Wen-Yin-Zhang-LMAFIT-10} uses a different model:
\begin{equation}\label{lmafit}
\min_{X,Y,Z}\{\|XY - Z\|_F:\mcp_\Omega(Z-M)=0\}.
\end{equation}
The model is solved by a nonlinear successive over-relaxation algorithm \cite{Grippo-Sciandrone-00}. In section 3, we compare the proposed algorithm to FPCA and LMaFit and demonstrate the benefits of taking advantages of factor nonnegativity.

\subsection{Organization}
The rest of this paper is organized as follows. Section 2 reviews the ADM algorithm and presents an ADM-based algorithm for NMFC. A preliminary convergence result of this algorithm is given in Section 2.3. Section 3 presents the results of numerical simulations, which perform tasks such as decomposing nonnegative matrices, compressing grayscale images, as well as recovering three-dimensional %synthetic and real
 hyperspectral cubes from incomplete samples. Finally, Section 4 concludes this paper.

\section{Algorithm and Convergence}
\subsection{Background: the ADM approach}
In a finite-dimensional setting, the classic alternating direction
method (ADM) solves structured convex programs in the form of
\begin{equation}\label{eqcadm}
\underset{x\in\mathcal{X}, y\in\mathcal{Y}}{\min} f(x)+g(y), \text{s.t.}~ Ax+By=c,
\end{equation}
where $f$ and $g$ are convex functions defined on closed subsets
$\mathcal{X}$ and $\mathcal{Y}$ of a finite-dimensional space,
respectively, and $A,B$ and $c$ are matrices and vector of appropriate
sizes.
The augmented Lagrangian of \eqref{eqcadm}
is
$$\mathcal{L}_A(x,y,\lambda)=f(x)+g(y)+\lambda^T(Ax+By-c)+\frac{\beta}{2}\|Ax+By-c\|_2^2,$$
where $\lambda$ is a Lagrangian multiplier vector and $\beta>0$ is a
penalty parameter.

The classic alternating direction method is an extension of the
augmented Lagrangian multiplier method \cite{hestenes1969multiplier, powell1969nonlinear,
  rockafellar1973multiplier}. It performs  minimization with
respect to $x$ and $y$ alternatively, followed by the update of $\lambda$; that is, at iteration $k$, 
\begin{subequations}
\begin{align}
x^{k+1}&\leftarrow \underset{x\in\mathcal{X}}{\arg\min}\mathcal{L}_A(x,y^k,\lambda^k)\label{eqcadm1},\\
y^{k+1}&\leftarrow\underset{y\in\mathcal{Y}}{\arg\min}\mathcal{L}_A(x^{k+1},y,\lambda^k)\label{eqcadm2},\\
\lambda^{k+1}&\leftarrow \lambda^k+\gamma\beta(Ax^{k+1}+By^{k+1}-c)\label{eqcadm3},
\end{align}
\end{subequations}
where $\gamma\in(0,1.618)$ is a step length. While
\eqref{eqcadm1} only involves $f(x)$ in the objective and
\eqref{eqcadm2} only involves $g(y)$, the classic augmented Lagrangian
method requires a  minimization of $\mathcal{L}_A(x,y,\lambda^k)$ with respect to 
$x$ and $y$ jointly, i.e., replacing \eqref{eqcadm1} and \eqref{eqcadm2} by
$$(x^{k+1},y^{k+1})\leftarrow
\underset{x\in\mathcal{X},y\in\mathcal{Y}}{\arg\min}\mathcal{L}_A(x,y,\lambda^k).$$
As the minimization couples $f(x)$ and $g(y)$, it can be much more difficult than \eqref{eqcadm1} and \eqref{eqcadm2}.
\subsection{Main Algorithm}
To facilitate an efficient use of ADM, we consider an equivalent form
of \eqref{eqnmf}:
\begin{eqnarray}\label{eqlnmf}
{\begin{array}{ll}
\min_{(U,V,X,Y,Z)}& \frac{1}{2}\|XY-Z\|_F^2\\
\st& X=U, Y=V,\\
& U\ge0, V\ge0,\\
& \mcp_\Omega(Z-M)=0,
\end{array}}
\end{eqnarray}
where $X,U\in\mbr^{m\times q}$ and $Y,V\in\mbr^{q\times n}$. 
The augmented Lagrangian of \eqref{eqlnmf} is
\begin{eqnarray*}
\mathcal{L}_A(X,Y,Z,U,V,\Lambda,\Pi)&=&\frac{1}{2}\|XY-Z\|_F^2+
\Lambda\bullet (X-U)\\
&&+\Pi\bullet(Y-V)+\frac{\alpha}{2}\|X-U\|_F^2+\frac{\beta}{2}\|Y-V\|_F^2,
\end{eqnarray*}
where $\Lambda\in\mbr^{m\times q}$, $\Pi\in\mbr^{q\times n}$ are
Lagrangian multipliers, $\alpha,\beta>0$ are penalty parameters, and $A\bullet
B:=\sum_{i,j}a_{ij}b_{ij}$ for matrices $A$ and $B$ of the same size. We deliberately leave $\mcp_\Omega(Z-M)=0$ in the constraints instead of relaxing them, so only those entries of $Z$ not in $\Omega$ are free variables.

The alternating direction method for \eqref{eqlnmf} is derived by
successively minimizing $\mathcal{L}_A$ with
respect to $X,Y,Z,U,V$, one at a time while fixing others at their
most recent values, i.e.,
\begin{eqnarray*}
X_{k+1}&=&\arg\min\mcl_A(X,Y_k,Z_k,U_k,V_k,\Lambda_k,\Pi_k),\\
Y_{k+1}&=&\arg\min\mcl_A(X_{k+1},Y,Z_k,U_k,V_k,\Lambda_k,\Pi_k),\\
Z_{k+1}&=&\underset{\mcp_\Omega(Z-M)=0}{\arg\min}\mcl_A(X_{k+1},Y_{k+1},Z,U_k,V_k,\Lambda_k,\Pi_k),\\
U_{k+1}&=&\underset{U\ge0}{\arg\min}\mcl_A(X_{k+1},Y_{k+1},Z_{k+1},U,V_k,\Lambda_k,\Pi_k),\\
V_{k+1}&=&\underset{V\ge0}{\arg\min}\mcl_A(X_{k+1},Y_{k+1},Z_{k+1},U_{k+1},V,\Lambda_k,\Pi_k),
\end{eqnarray*}
and then updating the multipliers $\Lambda$ and $\Pi$. Specifically, these steps can be
written in closed form as
\begin{subequations}\label{algadm}
\begin{align}
X_{k+1}&=(Z_kY_k^T+\alpha U_k-\Lambda_k)(Y_kY_k^T+\alpha I)^{-1},\\
Y_{k+1}&=(X_{k+1}^TX_{k+1}+\beta I)^{-1}(X_{k+1}^TZ_k+\beta V_k-\Pi_k),\\
Z_{k+1}&= X_{k+1}Y_{k+1} + \mcp_\Omega(M-X_{k+1}Y_{k+1}),\\
U_{k+1}&= \mcp_+(X_{k+1} +\Lambda_k/\alpha),\\
V_{k+1}&= \mcp_+(Y_{k+1} +\Pi_k/\beta),\\
\Lambda_{k+1}&= \Lambda_k+\gamma\alpha(X_{k+1}-U_{k+1}),\label{alglam}\\
\Pi_{k+1}&= \Pi_k+\gamma\beta(Y_{k+1}-V_{k+1}),\label{algpi}
\end{align}
\end{subequations}
where $\gamma\in(0,1.618)$ and
$(\mcp_+(A))_{ij}=\max\{a_{ij},0\}$. Since matrix inversions are applied to $q\times q$ matrices, they are relatively
inexpensive for $q< \min\{m,n\}$.
\subsection{Convergence}
Global convergence can be obtained when the classic ADM is applied to two-block convex programs in the form of \eqref{eqcadm}. However, to the best of our knowledge, there is no global convergence result in general for non-convex programs or convex programs with three or more blocks. Note that problem \eqref{eqlnmf} is non-convex and there are three blocks in updates \eqref{algadm}. Due to these difficulties, we provide a convergence property of the proposed ADM algorithm that holds only under some assumptions. 

A point $(X,Y,Z,U,V)$ satisfies the KKT condition for problem \eqref{eqlnmf} if there exist $\Lambda$ and $\Pi$ such that
\begin{subequations}\label{eqkkt}
\begin{align}
(XY-Z)Y^\top+\Lambda=0,\label{eqkkt1}\\
X^\top(XY-Z)+\Pi=0,\label{eqkkt2}\\
\mcp_{\Omega^c}(XY-Z)=0,\label{eqkkt3}\\
\mcp_\Omega(Z-M)=0,\label{eqkkt4}\\
X-U=0,\label{eqkkt5}\\
 Y-V=0,\label{eqkkt6}\\
\Lambda\le0\le U, \Lambda\odot U=0,\label{eqkkt7}\\
\Pi\le0\le V, \Pi\odot V=0,\label{eqkkt8}
\end{align}
\end{subequations}
where $\Omega^c$ indexes the unobserved entries of $M$,
and $\odot$ denotes component-wise multiplication. To simplify notation, we consolidate all the variables in problem \eqref{eqlnmf} as $W:=(X,Y,Z,U,V),$
and write $\mcl_A(X)$ to represent lagrangian function with respect to $X$ by fixing others at their most recent values.
\begin{theorem}\label{thmkkt}
Let $\{(W_k,\Lambda_k,\Pi_k)\}$ be a sequence generated by the ADM algorithm
\eqref{algadm}. If the multiplier sequence $\{(\Lambda_k,\Pi_k)\}$ is bounded and satisfies
\begin{equation}\label{eqconvg}
\sum_{k=0}^\infty\left(\|\Lambda_{k+1}-\Lambda_k\|_F^2+\|\Pi_{k+1}-\Pi_k\|_F^2\right)<\infty.
\end{equation}
Then any accumulation point of $\{W_k\}$ satisfies the KKT condition for problem
\eqref{eqlnmf}. Consequently, any accumulation point of
$\{(X_k,Y_k)\}$ satisfies the KKT condition for problem \eqref{eqnmf}.
\end{theorem}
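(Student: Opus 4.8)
\medskip
\noindent\textbf{Proof strategy.} The plan is to obtain the KKT conditions \eqref{eqkkt1}--\eqref{eqkkt8} by passing to the limit in the first-order optimality conditions of the five block subproblems in \eqref{algadm}; the crucial preliminary is to show that successive primal iterates converge together, i.e. $W_{k+1}-W_k\to0$. To begin, the multiplier updates \eqref{alglam}--\eqref{algpi} give $X_{k+1}-U_{k+1}=\frac{1}{\gamma\alpha}(\Lambda_{k+1}-\Lambda_k)$ and $Y_{k+1}-V_{k+1}=\frac{1}{\gamma\beta}(\Pi_{k+1}-\Pi_k)$, so \eqref{eqconvg} immediately yields $X_{k+1}-U_{k+1}\to0$ and $Y_{k+1}-V_{k+1}\to0$, which will dispatch \eqref{eqkkt5}--\eqref{eqkkt6}.

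For $W_{k+1}-W_k\to0$ I would run a descent estimate on $\mcl_A$. Each block subproblem in \eqref{algadm} has an objective that is strongly convex in its own block: the $X$- and $U$-subproblems with modulus $\alpha$ (owing to the term $\frac{\alpha}{2}\|X-U\|_F^2$), the $Y$- and $V$-subproblems with modulus $\beta$, and the $Z$-subproblem with modulus $1$ on the affine set $\mcp_\Omega(Z-M)=0$. Chaining the resulting bounds over one full sweep gives $\mcl_A(W_k,\Lambda_k,\Pi_k)-\mcl_A(W_{k+1},\Lambda_k,\Pi_k)\ge C\|W_{k+1}-W_k\|_F^2$ for every $k\ge1$, with $C>0$ depending only on $\alpha,\beta$, while the multiplier step satisfies the exact identity $\mcl_A(W_{k+1},\Lambda_{k+1},\Pi_{k+1})-\mcl_A(W_{k+1},\Lambda_k,\Pi_k)=\frac{1}{\gamma\alpha}\|\Lambda_{k+1}-\Lambda_k\|_F^2+\frac{1}{\gamma\beta}\|\Pi_{k+1}-\Pi_k\|_F^2$. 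Completing squares in the penalty terms shows $\mcl_A(W,\Lambda,\Pi)\ge-\frac{1}{2\alpha}\|\Lambda\|_F^2-\frac{1}{2\beta}\|\Pi\|_F^2$, so $\mcl_A(W_k,\Lambda_k,\Pi_k)$ is bounded below by the boundedness hypothesis on the multipliers. Telescoping the inequality $\mcl_A(W_{k+1},\Lambda_{k+1},\Pi_{k+1})\le\mcl_A(W_k,\Lambda_k,\Pi_k)-C\|W_{k+1}-W_k\|_F^2+\frac{1}{\gamma\alpha}\|\Lambda_{k+1}-\Lambda_k\|_F^2+\frac{1}{\gamma\beta}\|\Pi_{k+1}-\Pi_k\|_F^2$ and invoking \eqref{eqconvg} then forces $\sum_k\|W_{k+1}-W_k\|_F^2<\infty$, hence $W_{k+1}-W_k\to0$.

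Now let $W^*=(X^*,Y^*,Z^*,U^*,V^*)$ be an accumulation point, say $W_{k_j}\to W^*$, and by boundedness pass to a further subsequence with $(\Lambda_{k_j},\Pi_{k_j})\to(\Lambda^*,\Pi^*)$; since consecutive primal iterates and consecutive multipliers both have vanishing differences, $W_{k_j-1}\to W^*$ and $(\Lambda_{k_j-1},\Pi_{k_j-1})\to(\Lambda^*,\Pi^*)$ as well. Then: \eqref{eqkkt3}--\eqref{eqkkt4} hold identically for every $k\ge1$ from the closed form of the $Z$-update and pass to the limit trivially; \eqref{eqkkt1} follows by letting $j\to\infty$ in the $X$-stationarity equation $(X_{k_j}Y_{k_j-1}-Z_{k_j-1})Y_{k_j-1}^\top+\Lambda_{k_j-1}+\alpha(X_{k_j}-U_{k_j-1})=0$, the last term vanishing because $X^*=U^*$; \eqref{eqkkt2} likewise from the $Y$-stationarity equation; and \eqref{eqkkt7}--\eqref{eqkkt8} follow from the KKT systems of the nonnegativity-constrained $U$- and $V$-subproblems, using $\alpha(U_{k_j}-X_{k_j})=-\frac{1}{\gamma}(\Lambda_{k_j}-\Lambda_{k_j-1})\to0$ so that the limiting sign and complementarity relations become $\Lambda^*\le0\le U^*$, $\Lambda^*\odot U^*=0$ and $\Pi^*\le0\le V^*$, $\Pi^*\odot V^*=0$. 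This establishes the KKT conditions for \eqref{eqlnmf}. For the final claim, \eqref{eqkkt3}--\eqref{eqkkt4} give $X^*Y^*-Z^*=\mcp_\Omega(X^*Y^*-M)$, so \eqref{eqkkt1}--\eqref{eqkkt2} read $\mcp_\Omega(X^*Y^*-M)(Y^*)^\top=-\Lambda^*$ and $(X^*)^\top\mcp_\Omega(X^*Y^*-M)=-\Pi^*$; since the left-hand sides are exactly the partial gradients of $\frac12\|\mcp_\Omega(XY-M)\|_F^2$, combining with \eqref{eqkkt5}--\eqref{eqkkt8} gives precisely the KKT system of \eqref{eqnmf} at $(X^*,Y^*)$.

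I expect the main obstacle to be the step $W_{k+1}-W_k\to0$: without it the $X$- and $Y$-stationarity equations still contain the ``stale'' iterates $Y_{k_j-1},Z_{k_j-1},U_{k_j-1}$, which a priori need not converge along $\{k_j\}$, and the passage to the limit in \eqref{eqkkt1}--\eqref{eqkkt2} breaks down. The two hypotheses of the theorem are exactly what makes the telescoping estimate close --- the summability \eqref{eqconvg} controls the error injected by the multiplier updates, and the boundedness of $\{(\Lambda_k,\Pi_k)\}$ supplies the a priori lower bound on $\mcl_A$ --- while the proximal-type penalty terms $\frac{\alpha}{2}\|X-U\|_F^2$ and $\frac{\beta}{2}\|Y-V\|_F^2$ in the augmented Lagrangian furnish the per-block strong convexity that produces the $C\|W_{k+1}-W_k\|_F^2$ gain.
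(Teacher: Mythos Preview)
Your proposal is correct and follows essentially the same route as the paper: establish $W_{k+1}-W_k\to0$ via the per-block strong convexity descent estimate on $\mcl_A$, telescope using the lower bound from completed squares and the summability hypothesis \eqref{eqconvg}, then pass to the limit in the block optimality conditions along a subsequence. The only cosmetic difference is that the paper verifies \eqref{eqkkt7}--\eqref{eqkkt8} by analyzing the limiting identity $\mcp_+(\hat X+\hat\Lambda/\alpha)=\hat U$ componentwise, whereas you invoke the KKT system of the $U$- and $V$-subproblems directly; these are equivalent, and your treatment of the final reduction to the KKT conditions of \eqref{eqnmf} is in fact more explicit than the paper's.
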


\begin{proof}
First, we claim $W_{k+1}-W_k\to 0$, and $(\Lambda_{k+1},\Pi_{k+1})-(\Lambda_{k},\Pi_{k})\to 0$. We begin the proof of this claim by observing that $\mcl_A(W,\Lambda,\Pi)$ is bounded below. This follows from
\begin{eqnarray*}
\mathcal{L}_A(W,\Lambda,\Pi)&=&\frac{1}{2}\|XY-Z\|_F^2
+\frac{\alpha}{2}\|X-U+\Lambda/\alpha\|_F^2-\frac{1}{2\alpha}\|\Lambda\|_F^2\\
&&+\frac{\beta}{2}\|Y-V+\Pi/\beta\|_F^2-\frac{1}{2\beta}\|\Pi\|_F^2,
\end{eqnarray*}
and the boundedness of $\{(\Lambda,\Pi)\}$.
Furthermore, the lagrangian function $\mcl_A$ is strongly convex with respect to each variable of $X,Y,Z,U$ and $V$. For $X$-variable, it holds for any $X$ and $\Delta X$ that
\begin{equation}\label{eq:conx}\mcl_A(X+\Delta X)-\mcl_A(X)\ge\partial_X\mcl_A(X)^\top\Delta X+\alpha\|\Delta X\|_F^2.
\end{equation}
In addition, $X^*$ is a minimizer of $\mcl_A(X)$ implies the inequality 
\begin{equation}\label{eq:vi}\partial_X\mcl_A(X^*)^\top\Delta X\ge0.
\end{equation} 
Combining \eqref{eq:conx} and \eqref{eq:vi} and observing $X_{k+1}$ is a minimizer of $\mcl_A(X)$ at the $k$-th iteration, we have
\begin{equation}\label{eq:bx}\mcl_A(X_k)-\mcl_A(X_{k+1})\ge\alpha\|X_k-X_{k+1}\|_F^2,
\end{equation}
and in the same way,
\begin{subequations}\label{eq:bw}
\begin{align}
\mcl_A(Y_k)-\mcl_A(Y_{k+1})&\ge\beta\|Y_k-Y_{k+1}\|_F^2,\\
\mcl_A(Z_k)-\mcl_A(Z_{k+1})&\ge\|Z_k-Z_{k+1}\|_F^2,\\
\mcl_A(U_k)-\mcl_A(U_{k+1})&\ge\alpha\|U_k-U_{k+1}\|_F^2,\\
\mcl_A(V_k)-\mcl_A(V_{k+1})&\ge\beta\|V_k-V_{k+1}\|_F^2.
\end{align}
\end{subequations}
Let $c:=\min\{\alpha,\beta,1\}$. Then by \eqref{eq:bx} and \eqref{eq:bw}, we have
\begin{align*}
&\mcl_A(W_k,\Lambda_k,\Pi_k)-\mcl_A(W_{k+1},\Lambda_{k+1},\Pi_{k+1})\\
=&\mcl_A(W_k,\Lambda_k,\Pi_k)-\mcl_A(W_{k+1},\Lambda_{k},\Pi_{k})\\&+\mcl_A(W_{k+1},\Lambda_k,\Pi_k)-\mcl_A(W_{k+1},\Lambda_{k+1},\Pi_{k+1})\\
\ge&c\|W_k-W_{k+1}\|_F^2-\frac{1}{\gamma\alpha}\|\Lambda_k-\Lambda_{k+1}\|_F^2-\frac{1}{\gamma\beta}\|\Pi_k-\Pi_{k+1}\|_F^2\\
\ge&c\|W_k-W_{k+1}\|_F^2-\frac{1}{c\gamma}\left(\|\Lambda_k-\Lambda_{k+1}\|_F^2+\|\Pi_k-\Pi_{k+1}\|_F^2\right).
\end{align*}
Taking summation of the above inequality and recalling $\mcl_A(W,\Lambda,\Pi)$ is bounded below, we get
$$\sum_{k=0}^\infty c\|W_k-W_{k+1}\|_F^2-\sum_{k=0}^\infty\frac{1}{c\gamma}\left(\|\Lambda_k-\Lambda_{k+1}\|_F^2+\|\Pi_k-\Pi_{k+1}\|_F^2\right)<\infty.$$
Since the second term on the left of the above inequality is bounded, it follows that
$$\sum_{k=0}^\infty c\|W_k-W_{k+1}\|_F^2<\infty,$$
from which we immediately have $W_{k+1}-W_k\to 0$. For $(\Lambda_{k+1},\Pi_{k+1})-(\Lambda_{k},\Pi_{k})\to 0$, it directly follows from \eqref{eqconvg}.

Now, we are ready to prove the result of this theorem. First, rearrange the ADM formulas in \eqref{algadm} into
\begin{subequations}\label{algradm}
\begin{eqnarray}
(X_{k+1}-X_k)(Y_kY_k^T+\alpha I)&=&-((X_kY_k-Z_k)Y_k^T\label{radm1}\\&&\quad+\alpha(X_k-U_k)+\Lambda_k),\nonumber\\
(X_{k+1}^TX_{k+1}+\beta I)(Y_{k+1}-Y_k)&=&-(X_{k+1}^T(X_{k+1}Y_k-Z_k)\label{radm2}\\&&\quad+\beta(Y_k-V_k)+\Pi_k),\nonumber\\
U_{k+1}-U_k&= &\mcp_+(X_{k+1} +\Lambda_k/\alpha)-U_k,\label{radm3}\\
V_{k+1}-V_k&= &\mcp_+(Y_{k+1} +\Pi_k/\beta)-V_k,\label{radm4}\\
\Lambda_{k+1}-\Lambda_k&= &\gamma\alpha(X_{k+1}-U_{k+1}),\label{radm5}\\
\Pi_{k+1}-\Pi_k&= &\gamma\beta(Y_{k+1}-V_{k+1}),\label{radm6}
\end{eqnarray}
\end{subequations}
and
\begin{equation}\label{admz}
Z_{k+1}= X_{k+1}Y_{k+1} + \mcp_\Omega(M-X_{k+1}Y_{k+1}).
\end{equation}
Note $W_{k+1}-W_k\to 0$, $\Lambda_{k+1}-\Lambda_k\to 0$ and $\Pi_{k+1}-\Pi_k\to 0$ imply that the left- and right-hand sides
in \eqref{algradm} all go to zero, i.e.,
\begin{subequations}\label{limadm}
\begin{eqnarray}
(X_kY_k-Z_k)Y_k^T+\Lambda_k&\to &0,\label{limadm1}\\
X_{k}^T(X_{k}Y_k-Z_k)+\Pi_k)&\to& 0,\label{limadm2}\\
\mcp_+(X_{k} +\Lambda_k/\alpha)-U_k&\to& 0,\label{limadm3}\\
\mcp_+(Y_{k} +\Pi_k/\beta)-V_k&\to &0,\label{limadm4}\\
X_{k}-U_{k}&\to& 0,\label{limadm5}\\
Y_{k}-V_{k}&\to& 0,\label{limadm6}
\end{eqnarray}
\end{subequations}
where the terms $\alpha(X_k-U_k)$ and $\beta(Y_k-V_k)$ have been
eliminated in \eqref{limadm1} and \eqref{limadm2}, respectively, by
invoking \eqref{limadm5} and \eqref{limadm6}. For any limit point $\hat{W}=(\hat{X},\hat{Y},\hat{Z},\hat{U},\hat{V})$ of sequence $\{W_k\}$, there exists subsequence $\{W_{n_k}\}$ converging to $\hat{W}$. The boundedness of $\{(\Lambda_k,\Pi_k)\}$ implies the existence of a sub-subsequence $\{(\Lambda_{n_{k_j}},\Pi_{n_{k_j}})\}$ of $\{(\Lambda_{n_{k}},\Pi_{n_{k}})\}$ converging to some point $(\hat{\Lambda},\hat{\Pi})$. Hence, $(\hat{W},\hat{\Lambda},\hat{\Pi})$ is a limit point of sequence $\{(W_k,\Lambda_k,\Pi_k)\}$. Since \eqref{admz}
exactly means
$$\mcp_\Omega(Z_k-M)=0,\text{ and }\mcp_\Omega(X_kY_k-Z_k)=0,$$ then
clearly, the first six equations in the KKT conditions \eqref{eqkkt}
are satisfied at the limit point $(\hat{W},\hat{\Lambda},\hat{\Pi})$.
The nonnegativity of $\hat{U}$ and $\hat{V}$ are guaranteed by the
algorithm construction. Therefore, we only need to verify the
non-positivity of $\hat{\Lambda}$ and $\hat{\Pi}$, and the
complementarity between $\hat{U}$ and $\hat{\Lambda}$, and between
$\hat{V}$ and $\hat{\Pi}$. Now we examine the following two
equations derived from \eqref{limadm3} and \eqref{limadm4},
respectively,
\begin{subequations}\label{limhadm}
\begin{align}
\mcp_+(\hat{X} +\hat{\Lambda}/\alpha)=\hat{U},\label{limhadm1}\\
\mcp_+(\hat{Y} +\hat{\Pi}/\beta)=\hat{V}.\label{limhadm2}
\end{align}
\end{subequations}
Note we have $\hat{X}=\hat{U}\ge0$. If
$\hat{U}_{ij}=\hat{X}_{ij}=0$, then \eqref{limhadm1} reduces
$\mcp_+(\hat{\Lambda}/\alpha)_{ij}=0$, which implies
$\hat{\Lambda}_{ij}\le 0$. On the other hand, if
$\hat{U}_{ij}=\hat{X}_{ij}>0$, then \eqref{limhadm1} implies
$\hat{\Lambda}_{ij}= 0$. This proves the non-positivity of
$\hat{\Lambda}$ and the complementarity between $\hat{U}$ and
$\hat{\Lambda}$. The same argument can be applied to
\eqref{limhadm2}, due to the identical structure, to prove the
non-positivity of $\hat{\Pi}$ and the complementarity between
$\hat{V}$ and $\hat{\Pi}$.

We have verified the statement concerning the sequence $\{W_k\}$ and
problem \eqref{eqlnmf}. The statement concerning the sequence
$\{(X_k,Y_k)\}$ and problem \eqref{eqnmf} follows directly from the
equivalence between the two problems. This completes the proof.
\end{proof}

From the proof of Theorem \ref{thmkkt}, we can immediately get the following corollary.
\begin{corollary}\label{corkkt}
Let $\{(W_k,\Lambda_k,\Pi_k)\}$ be a sequence generated by the ADM algorithm
\eqref{algadm}. Whenever the sequence converges, the limit satisfies the KKT conditions.
\end{corollary}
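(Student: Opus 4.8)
The plan is to extract precisely those consequences of the hypotheses of Theorem \ref{thmkkt} that are actually used in the second half of its proof, and to observe that plain convergence of the sequence delivers them for free. First I would note that if $\{(W_k,\Lambda_k,\Pi_k)\}$ converges, then in particular it is bounded, so the boundedness hypothesis on $\{(\Lambda_k,\Pi_k)\}$ in Theorem \ref{thmkkt} holds automatically; moreover, being convergent, the sequence has vanishing successive differences, i.e. $W_{k+1}-W_k\to 0$, $\Lambda_{k+1}-\Lambda_k\to 0$ and $\Pi_{k+1}-\Pi_k\to 0$. These three limits are exactly the conclusions established in the first part of the proof of Theorem \ref{thmkkt} (where the summability condition \eqref{eqconvg} together with the lower boundedness of $\mcl_A$ was invoked to produce them); here we obtain them directly, so the summability condition \eqref{eqconvg} plays no role and is simply dropped.

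Next I would invoke the remainder of the proof of Theorem \ref{thmkkt} essentially verbatim. From $W_{k+1}-W_k\to 0$, $\Lambda_{k+1}-\Lambda_k\to 0$ and $\Pi_{k+1}-\Pi_k\to 0$, the rearranged iteration formulas \eqref{algradm} force all the quantities in \eqref{limadm} to tend to zero, while \eqref{admz} gives $\mcp_\Omega(Z_k-M)=0$ and $\mcp_\Omega(X_kY_k-Z_k)=0$ for every $k$. Passing to the limit — which now needs no subsequence extraction, since the whole sequence converges, say to $(\hat{W},\hat{\Lambda},\hat{\Pi})=(\hat{X},\hat{Y},\hat{Z},\hat{U},\hat{V},\hat{\Lambda},\hat{\Pi})$ — yields equations \eqref{eqkkt1}--\eqref{eqkkt6}. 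Nonnegativity of $\hat{U}$ and $\hat{V}$ is guaranteed by the algorithm, and the sign and complementarity conditions \eqref{eqkkt7}--\eqref{eqkkt8} follow from the limiting relations \eqref{limhadm1}--\eqref{limhadm2} by the same componentwise case analysis ($\hat{X}_{ij}=\hat{U}_{ij}=0$ versus $\hat{X}_{ij}=\hat{U}_{ij}>0$, and similarly for $\hat{Y},\hat{V}$) used in Theorem \ref{thmkkt}. Hence $(\hat{W},\hat{\Lambda},\hat{\Pi})$ satisfies the KKT system \eqref{eqkkt}.

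There is essentially no obstacle here: the corollary is a specialization, not a strengthening, of Theorem \ref{thmkkt}, and the argument is in fact shorter because the limit point coincides with the limit. The only point that deserves an explicit sentence is that the opening paragraph of the proof of Theorem \ref{thmkkt} serves merely to produce $W_{k+1}-W_k\to 0$ and $(\Lambda_{k+1},\Pi_{k+1})-(\Lambda_k,\Pi_k)\to 0$; once these are granted directly from convergence, the hypothesis \eqref{eqconvg} is unnecessary and everything downstream carries over unchanged.
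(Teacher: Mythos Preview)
Your proposal is correct and mirrors the paper's own treatment: the paper simply remarks that the corollary follows immediately from the proof of Theorem~\ref{thmkkt}, and your write-up spells out exactly why---convergence of the full sequence supplies boundedness of $\{(\Lambda_k,\Pi_k)\}$ and the vanishing successive differences $W_{k+1}-W_k\to 0$, $(\Lambda_{k+1},\Pi_{k+1})-(\Lambda_k,\Pi_k)\to 0$ directly, after which the second half of that proof applies verbatim (without the need for subsequence extraction).
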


\section{Numerical Results}\label{sec:numerical}

\subsection{Implementation and Parameters}
A pseudo code for the proposed algorithm is given in Algorithm \ref{alg:ADM} below.
\begin{algorithm}
\caption{ADM-based algorithm for NMFC } \label{alg:ADM}
\begin{algorithmic}
\STATE Input $A=\mcp_\Omega(M)\in\mbr^{m\times
n}$, integer $q>0$, \emph{maxiter} $>0$, and $tol>0$.
\STATE Set $\alpha, \beta,
\gamma>0$. Set $Y$ as a nonnegative random matrix, $Z=A$, and
$U,V,\Lambda, \Pi$ as zero matrices of appropriate sizes.
\FOR{$k=1,\ldots, maxiter$}
\STATE Update $(X_k,Y_k,Z_k,U_k,V_k,\Lambda_k, \Pi_k)$ by the
formulas \eqref{algadm};
\IF{a stopping criterion is met}
\STATE exit and output $(X_k,Y_k)$
\ENDIF
\ENDFOR
\end{algorithmic}
\end{algorithm}

The most important  parameters are $\alpha,\beta$ and
$\gamma$. In our implementation, we set $\gamma=1.618$, and
$\beta=n\alpha/m$. The setting $\beta=n\alpha/m$ considers the different sizes of $X$ and $Y$ and balances the penalties for constraints $X=U$ and $Y=V$. The naive setting $\alpha=\beta$ also works for our tests but reduces the speed of convergence. By running a range of numerical experiments, we
heuristically scale $A$ so that $\|A\|_F=2.5\times 10^5$ and select $\alpha=2.0\times10^{-4}\|A\|_F\max(m,n)/q$. They have worked
well for our tested matrices,  and it is worth mentioning that algorithm \ref{alg:ADM} can work well for different $\alpha,\beta$ in a fairly large interval. The iteration stops once either one of the following conditions is met:
\begin{subequations}\label{eq:stopcriterion}
\begin{eqnarray}
\frac{|f_{k+1}-f_k|}{\max(1,|f_k|)}\le tol,\label{crit1}\\
f_k\le tol\label{crit2},
\end{eqnarray}
\end{subequations}
where $f_k=\|\mcp_\Omega(X_kY_k-A)\|_F/\|A\|_F$. %Moreover, we require
%that the first condition above must be satisfied at three consecutive
%iterations.
All tests were performed on a Lenovo T410 laptop with an i7-620m CPU and 3 gigabytes of memory and running 32-bit Windows 7 and MATLAB 2010b.

\subsection{Random Nonnegative Matrices Factorization}
We compared  the algorithm proposed in \cite{zhang2010admnmf} with the proposed algorithm \ref{alg:ADM}, where the former algorithm takes complete samples of a random matrix $M$ while the latter algorithm takes  75\%, 50\%, and 25\% samples of the same matrix $M$. %The other algorithm only takes full sampled matrices, so its input has sample rate 100\%.
While other reported tests in this paper used parameters and stopping rules given above, this test set used different but consistent parameters (which are not optimal for algorithm 1) for both algorithms in order to accurately reveal their performance difference and the difference between NMF and NMFC: $\alpha=\beta=10^4$ and $tol = 10^{-6}$.  We generated each rank-$r$ nonnegative matrix $M\in\mathbb{R}^{m\times n}$ in the form of $M=LDR$, where $L\in\mathbb{R}^{m\times r}$ and $R\in\mathbb{R}^{r\times n}$ were generated by calling MATLAB's command \verb!rand! and $D$ is an $r\times r$ diagonal matrix with diagonal elements $1,2,\ldots, r$. Such scaling makes $M$ slightly ill-conditioned. We tested different combinations of $n$ and $m$ and obtained roughly consistent results. Figure \ref{fg:rand_mtx} depicts the recovery qualities and speeds corresponding to $m=n=500$ and varying $q=r = 20$ through $50$. The results are the averages of 50 independent trials.

\begin{figure}
\begin{center}
\includegraphics[width=0.45\textwidth]{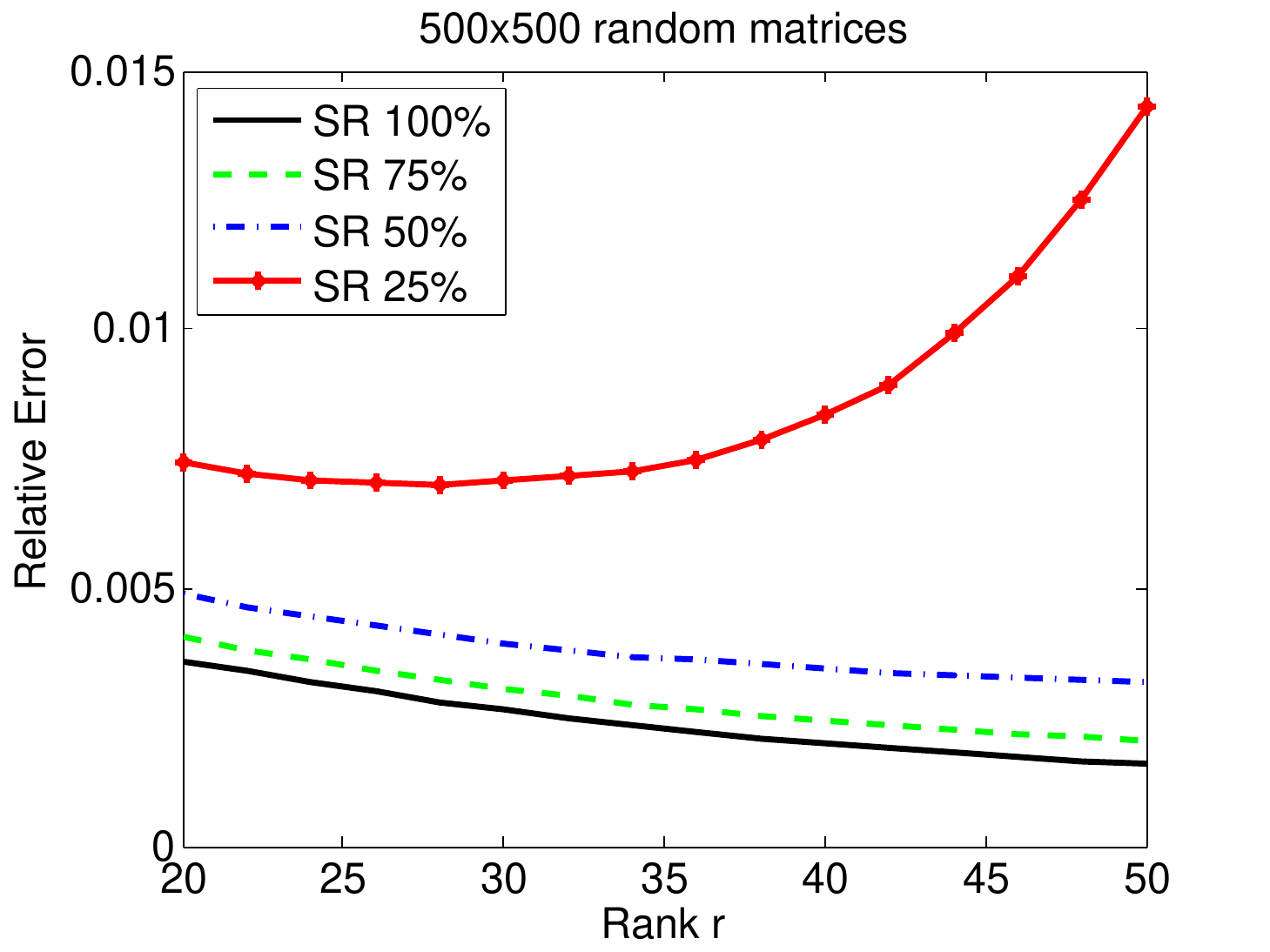}
\includegraphics[width=0.45\textwidth]{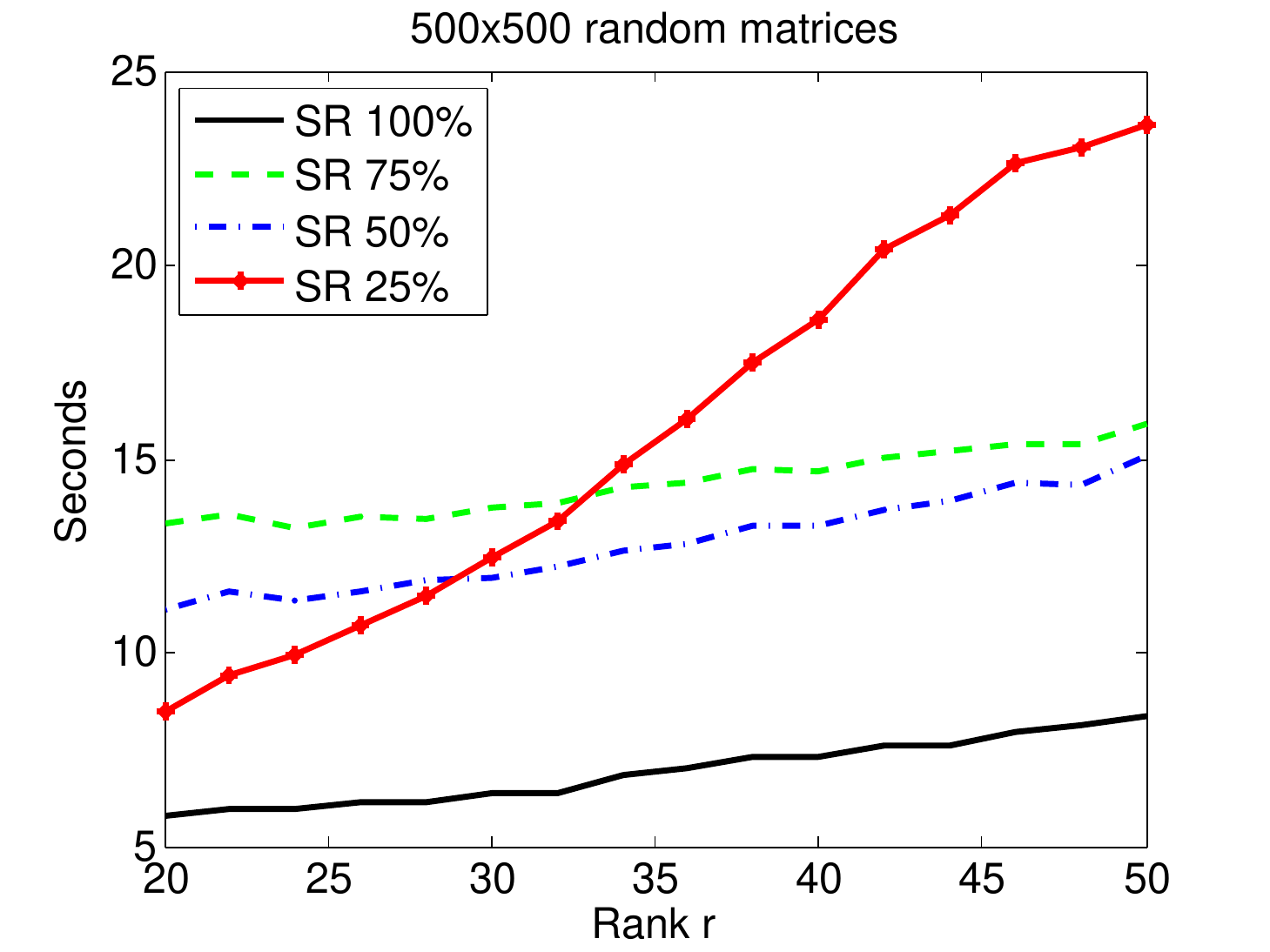}
\end{center}
\caption{Matrix completion with different sample rates (SRs). Left: relative error in Frobenious norm; Right: cpu time in seconds. The algorithm in \cite{zhang2010admnmf} was used for SR=100\%. Algorithm \ref{alg:ADM} was used for SR=70\%, 50\%, 25\%. All tests used the same parameters and stopping tolerances, and results are the averages over 50 independent trials}
\label{fg:rand_mtx}
\end{figure}

The quality of recovery is similar for SR = 100\%, 75\%, and 50\% for the set of tested matrices. They are all faithful recoveries with relative errors around 0.4\%. The relative errors for SR = 75\%, and 50\% are just slightly worse. The low SR = 25\% makes the recovery more difficult. When the ranks $r$ are between 20 and 30, the four error curves are roughly parallel though the red curve (SR = 25\%) is worse at relative errors around 0.6\%. When $r>30$, 25\% of entries seem no longer enough for faithful recovery and consequently, the red curve (SR = 25\%) begins to deviate from the others as $r$ increases, and it exhibits a steep upward trend. The difficulty with SR = 25\% samples for large $r$ is also shown in terms of cpu seconds. The times for SR = 75\% and 50\% are about three times as long as those for SR = 100\%. Since the times are the averages of merely 50 trials, the curves are not as smooth as they would be if the trials were much more.

The large gap between the red curve in Figure \ref{fg:rand_mtx}(left) and the other curves is largely due to the use of the same stopping tolerance $10^{-6}$. However, SR=25\% can reach the similar accuracy of higher SRs if it has a tighter tolerance (e.g., $10^{-7}$) and runs more iterations, at least for $r \le 30$. In this sense, lower SRs do not necessarily mean much larger errors.

\subsection{Overview of Algorithm LMaFit and FPCA}
Before more simulation results are presented, let us overview LMaFit and FPCA, which were compared to Algorithm \ref{alg:ADM} in the next two simulations.
 LMaFit solves \eqref{lmafit} based on a nonlinear successive over-relaxation (SOR) method. From its first-order optimality conditions
\begin{eqnarray*}
{\begin{array}{r}
(XY-Z)Y^\top=0,\\
X^\top(XY-Z)=0,\\
\mcp_{\Omega^c}(Z-XY)=0,\\
\mcp_\Omega(Z-M)=0.
\end{array}}
\end{eqnarray*}
 the nonlinear SOR scheme is derived as
\begin{eqnarray*}
{\begin{array}{rcl}
X_{k+1}&=&Z_kY_k^\top(Y_kY_k^T)^\dagger,\\
X_{k+1}(\omega)&=&\omega X_{k+1}+(1-\omega)X_k,\\
Y_{k+1}&=&(X_{k+1}(\omega)^\top X_{k+1}(\omega))^\dagger(X_{k+1}(\omega)^\top Z_k),\\
Y_{k+1}(\omega)&=&\omega Y_{k+1}+(1-\omega)Y_k,\\
Z_{k+1}(\omega)&=&X_{k+1}(\omega)Y_{k+1}(\omega)+\mcp_\Omega(M-X_{k+1}(\omega)Y_{k+1}(\omega)),
\end{array}}
\end{eqnarray*}
where the weight $\omega\ge 1$. One of its stopping criterions is the same as \eqref{crit1}. In our tests described below, we set tol = $10^{-5}$ for Alg \ref{alg:ADM} and LMaFit and chose different maximum numbers of iterations based on the size of recovered matrix, which will be specified below. We  applied the rank-estimation technique coming with LMaFit (hence, we did not fix $q$ for LMaFit).%, and the values of parameters related to rank-modification will be specified in each test.

FPCA  solves convex problems in the form of
\begin{equation*}
\min \mu\|X\|_*+\frac{1}{2}\|\mathcal{A}(X)-b\|_2^2,
\end{equation*}
which includes \eqref{nuc} as a special case by setting the linear operator $\mathcal{A}$ to  $\mcp_\Omega$. Introducing
$h(X)=\mathcal{A}^*(\mathcal{A}(X)-b)$, where $\mathcal{A}^*$ is
the adjoint of $\mathcal{A}$, we can write the iteration of FPCA as
\begin{eqnarray*}\
\left\{\begin{array}{rcl} Y_k&\gets &X_k-\tau h(X_k),\\
X_{k+1}&\gets &S_{\tau\mu}(Y_k), \end{array}\right.
\end{eqnarray*}
where $S_\nu(\cdot)$ is a matrix singular-value shrinkage operator.
In our tests described below, the parameters for FPCA were set to their default values: specifically, tol = $10^{-6}$ and maxiter = $10^5$. %since every iteration of FPCA is much cheaper than one iteration of algorithm \ref{alg:ADM} or LMaFit.
For the default values of other parameters such as $\tau$ and $\mu$, we refer the reader to \cite{ma2009fixed}.

\subsection{Hyperspectral Data Recovery}
In this subsection, we compare Algorithm \ref{alg:ADM} with LMaFit \cite{Wen-Yin-Zhang-LMAFIT-10} and FPCA \cite{ma2009fixed} on
recovering %synthetic and
  three-dimensional hyperspectral images from their incomplete observations. Hyperspectral (or multispectral) imaging is widely used in applications from environmental studies and biomedical imaging to military surveillance. A hyperspectral image is a three-dimensional datacube that records the electromagnetic reflectance of a scene at varying wavelengths, from which different materials in the scene can be identified by exploiting their electromagnetic scattering patterns. We let each hyperspectral datacube be represented by a three-dimensional array whose first two dimensions are spatial and third dimension is wavelength. A hyperspectral datacube can have several hundreds of wavelengths (along the third dimension) but no more than a dozen dominant materials. As a consequence, the spectral vector at every spatial location can be (approximately) linearly expressed by a small set of common vectors, called endmembers or spectral signatures of materials. The number of these basic vectors is much smaller than the number of wavelengths. Since endmembers are naturally nonnegative, a hyperspectral datacube is a set of  nonnegative mixtures of a few endmembers, which are also nonnegative. This property makes it possible to recover the endmembers and mixture coefficients from a hyperspectral datacube, and it is called \emph{unmixing}. Although unmixing is not as simple as NMF, the results of NMF can be used as an initial guess. Compared to NMF, NMFC not only performs initial unmixing but also recovers the datacube from an incomplete set of observed voxels. This advantage will translate to shorter sampling times and perhaps simpler designs of hyperspectral imaging devices.

In our simulation, the hyperspectral datacube has 163 wavelengths or slices, and the size of each slice is $80\times 80$.  Three selected slices are shown in figure \ref{fig:real-orig}. They depict an urban area at three different wavelengths. Roads, roofs,  plants, as well as other objects exhibit different intensities.  Our simulation begin with reshaping the $80\times 80\times 163$ hyperspectral datacube  to a $6400\times 163$  matrix $M$, each slice becoming one column of $M$. While $M$ is full rank, its singular values are fast decaying. We chose the estimate rank $q=30$, and set tol=$10^{-5}$ and maxiter = 2000 for Algorithm \ref{alg:ADM}, and
tol=$10^{-5}$ and maxiter = 2000, est\_rank=2, rk\_inc =3 for  LMaFit. The  parameters for FPCA were
set to their default values.

\begin{figure}[htbp]
\begin{center}
\includegraphics[width=0.7\textwidth, height=0.25\textwidth]{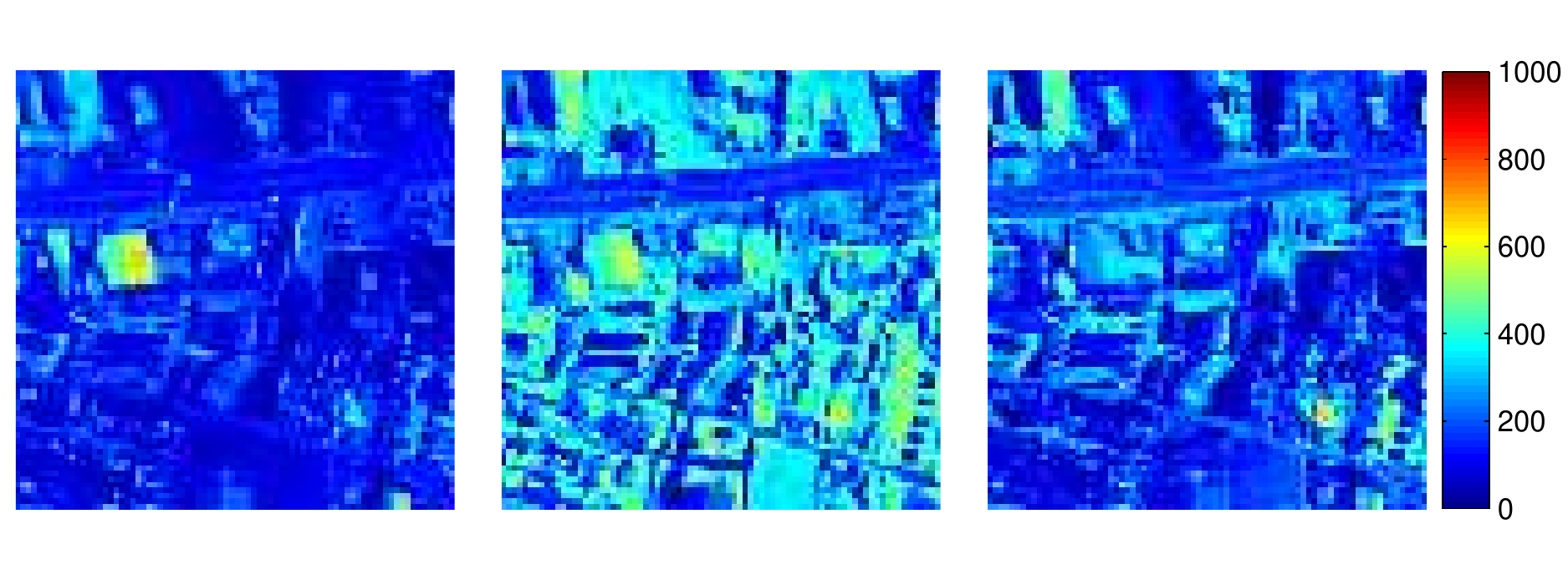}
\caption{Original slices of hyperspectral cube} \label{fig:real-orig}
\end{center}
\end{figure}

The three algorithms were compared on recovering $M$ from incomplete observations of SR = 30\%, 40\%, 50\%, and their results were compared in terms of  peak signal-to-noise ratio (PSNR), as well as mean squared error (MSE). Specifically, given a recovered matrix $\hat{M}$ from incomplete samples of $M\in\mathbb{R}^{m\times n}$, we let
\begin{align*}
\text{MSE}&:=\frac{1}{mn}\|\hat{M}-M\|_F^2,\\
\text{PSNR}&:=20\log_{10}\left(\frac{\text{MAX}_\text{I}}{\sqrt{\text{MSE}}}\right),
\end{align*}
where $\text{MAX}_\text{I}$ is the maximum  pixel intensity, which is $1023$ in this subsection for the tested hyperspectral data and $1$ in subsection \ref{subsec:imagetest} for two grayscale images. The results are listed in table \ref{table:real}, and the  three  slices of the recovered datacube that correspond to those in figure \ref{fig:real-orig} are depicted in figure \ref{fig:real-recslic}. The results show that Algorithm \ref{alg:ADM} performs
better than FPCA in both CPU time and recovery quality.  LMaFit is comparable with algorithm \ref{alg:ADM} in terms of speed but less accurate. We believe that the use of nonnegativity is a major factor for the superiority of the results of algorithm \ref{alg:ADM}.

\begin{table} \caption{real data: recovered slices by Algorithm \ref{alg:ADM},
  LMaFit, and FPCA. The rank estimate for Algorithm \ref{alg:ADM} and LMaFit is 30. }\label{table:real}
    {\scriptsize
 \setlength{\tabcolsep}{2pt}  %\centering
 \begin{center}
\begin{tabular}{|c|ccc|ccc|ccc|} \hline
\multicolumn{1}{|c|}{problem} & \multicolumn{3}{|c|}{Alg \ref{alg:ADM}}  &  \multicolumn{3}{|c|}{LMaFit} &\multicolumn{3}{|c|}{FPCA}   \\ \hline
seed & CPU & PSNR & MSE & CPU & PSNR & MSE & CPU & PSNR & MSE \\ \hline
\multicolumn{10}{|c|}{SR: 30\%}\\ \hline
  3445 &  27.15 & 47.71 & 1.77e+001  &  14.80 & 45.05 & 3.27e+001 &  39.39 & 43.31 & 4.89e+001 \\ \hline 
 31710 &  26.38 & 47.52 & 1.85e+001 &  31.35 & 43.08 & 5.15e+001 &  39.19 & 43.56 & 4.61e+001 \\ \hline 
 43875 &  27.45 & 47.50 & 1.86e+001 &  40.79 & 42.26 & 6.23e+001 &  38.49 & 44.35 & 3.84e+001 \\ \hline 
 69483 &  25.66 & 47.71 & 1.77e+001 &  42.07 & 42.20 & 6.31e+001 &  39.04 & 44.20 & 3.98e+001 \\ \hline 
 95023 &  25.67 & 47.48 & 1.87e+001 &  32.21 & 43.02 & 5.22e+001 &  39.14 & 43.13 & 5.10e+001 \\ \hline 
\hline
\multicolumn{10}{|c|}{SR: 40\%}\\ \hline
  3445 &  28.46 & 48.89 & 1.35e+001 &  27.51 & 44.96 & 3.34e+001 &  42.85 & 44.92 & 3.37e+001 \\ \hline 
 31710 &  28.66 & 49.00 & 1.32e+001 &  25.99 & 45.70 & 2.82e+001&  42.96 & 44.92 & 3.37e+001 \\ \hline 
 43875 &  29.81 & 48.88 & 1.36e+001 &  21.89 & 45.09 & 3.24e+001 &  43.32 & 44.48 & 3.73e+001 \\ \hline 
 69483 &  28.24 & 48.86 & 1.36e+001 &  21.38 & 45.26 & 3.12e+001 &  43.99 & 44.26 & 3.92e+001 \\ \hline 
 95023 &  29.10 & 48.73 & 1.40e+001 &  19.31 & 45.95 & 2.66e+001 &  43.87 & 44.79 & 3.47e+001 \\ \hline 
\hline
\multicolumn{10}{|c|}{SR: 50\%}\\ \hline
  3445 &  30.71 & 49.73 & 1.11e+001 &  34.78 & 44.69 & 3.55e+001 &  47.62 & 44.43 & 3.77e+001 \\ \hline 
 31710 &  30.39 & 49.92 & 1.07e+001 &  22.75 & 46.21 & 2.50e+001&  46.64 & 43.69 & 4.47e+001 \\ \hline 
 43875 &  31.34 & 49.74 & 1.11e+001 &  22.69 & 46.47 & 2.36e+001 &  46.77 & 44.20 & 3.98e+001 \\ \hline 
 69483 &  30.36 & 49.98 & 1.05e+001 &  29.84 & 45.00 & 3.31e+001 &  47.15 & 44.15 & 4.02e+001 \\ \hline 
 95023 &  30.27 & 49.83 & 1.09e+001 &  47.71 & 44.09 & 4.08e+001 &  47.63 & 44.36 & 3.84e+001  \\ \hline 
\end{tabular}
  \end{center}
}
 \end{table}

\begin{figure}\caption{Recovered slices by Algorithm \ref{alg:ADM} (first rows), LMaFit (second rows), and FPCA (last rows), respectively; rank estimate for Algorithm \ref{alg:ADM} and LMaFit is 30}
\label{fig:real-recslic}
\begin{minipage}[t]{1\textwidth}
\centering
\subfigure[SR = 30\%]{
\begin{minipage}[t]{0.31\textwidth}
\centering
 \includegraphics[width=0.99\textwidth, height =
 0.9\textwidth]{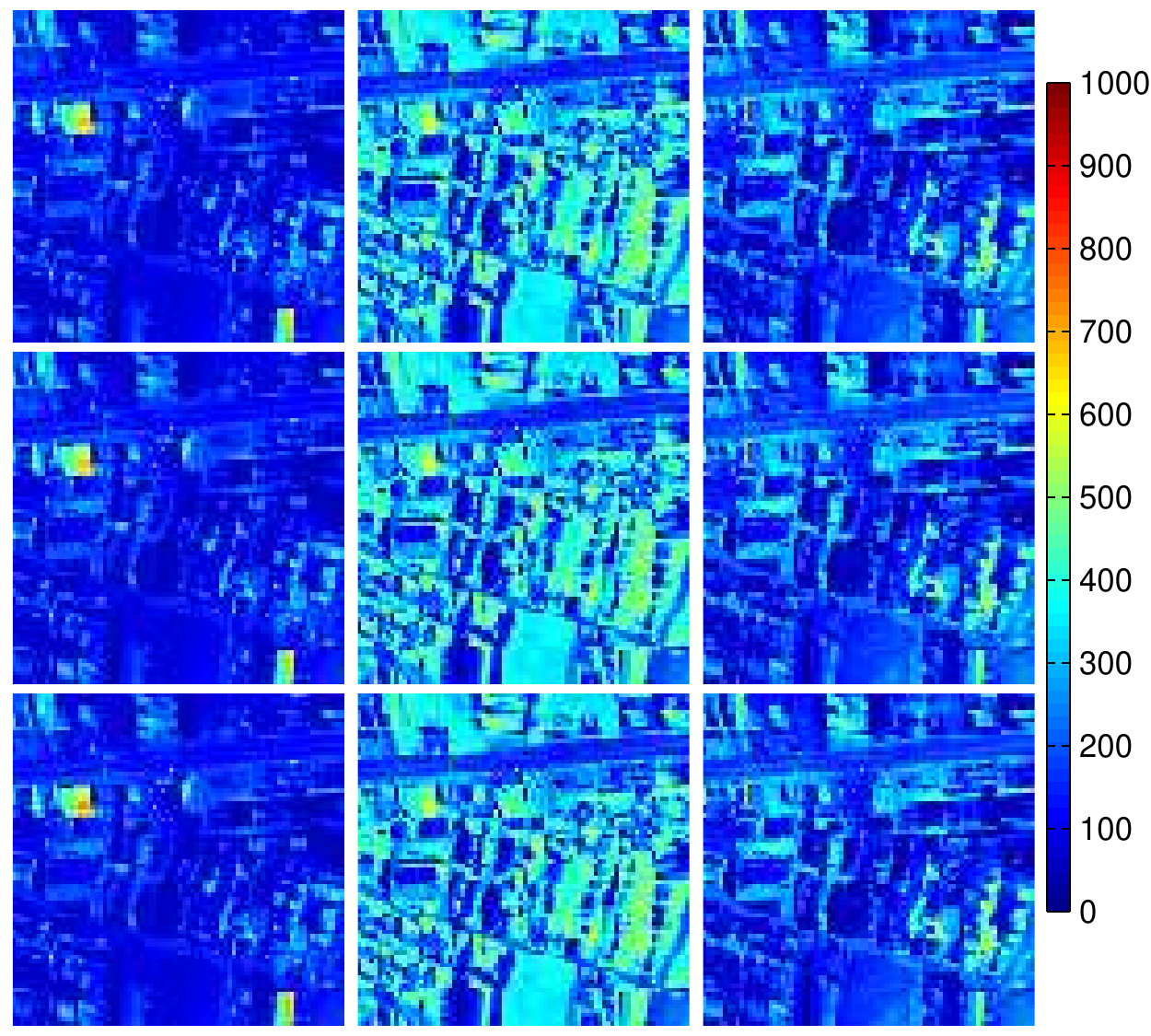}
\end{minipage}}
\subfigure[SR = 40\%]{
\begin{minipage}[t]{0.31\textwidth}
\centering
 \includegraphics[width=0.99\textwidth, height =
 0.9\textwidth]{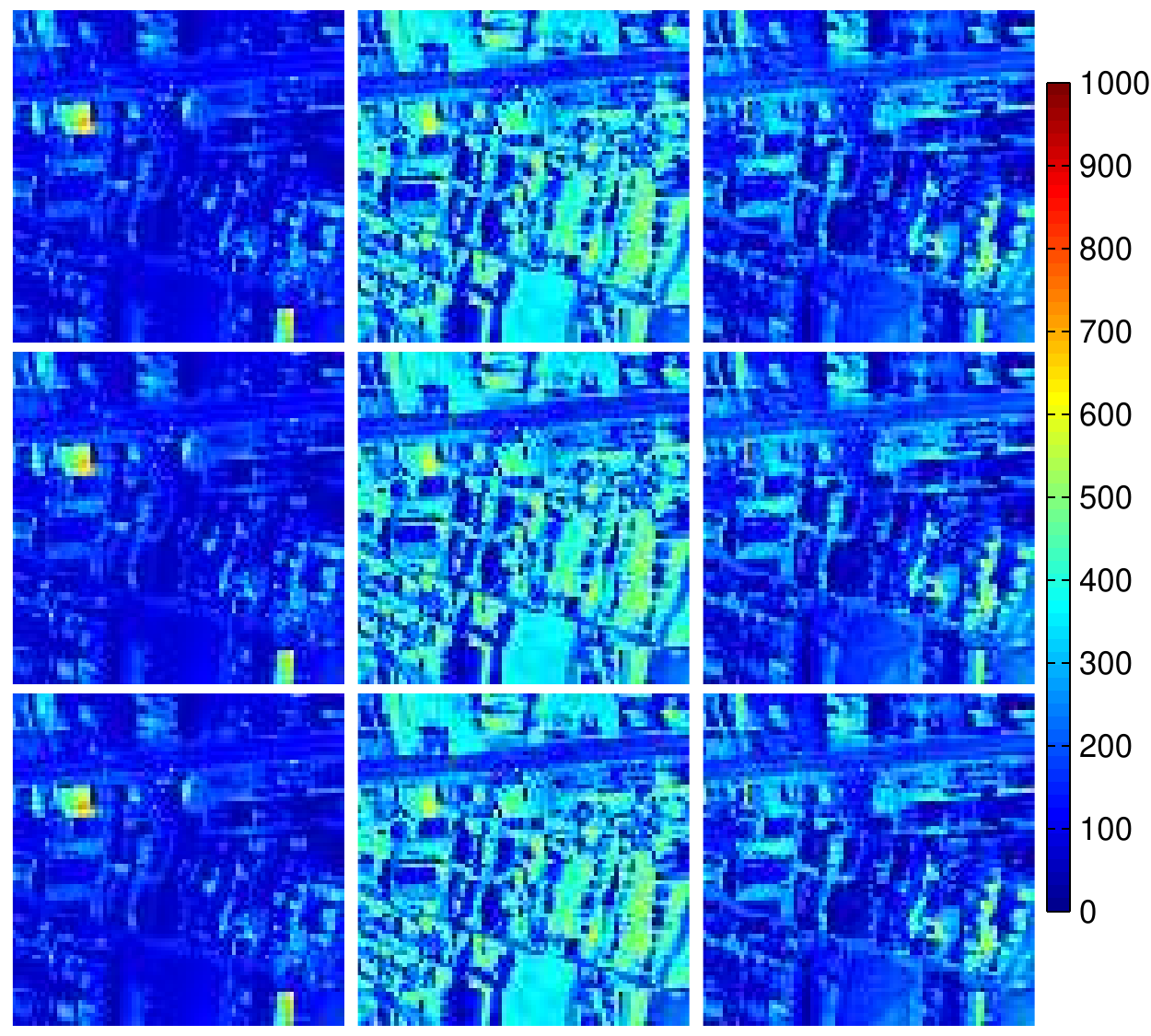}
\end{minipage}}
\subfigure[SR = 50\%]{
\begin{minipage}[t]{0.31\textwidth}
\centering
 \includegraphics[width=0.99\textwidth, height =
 0.9\textwidth]{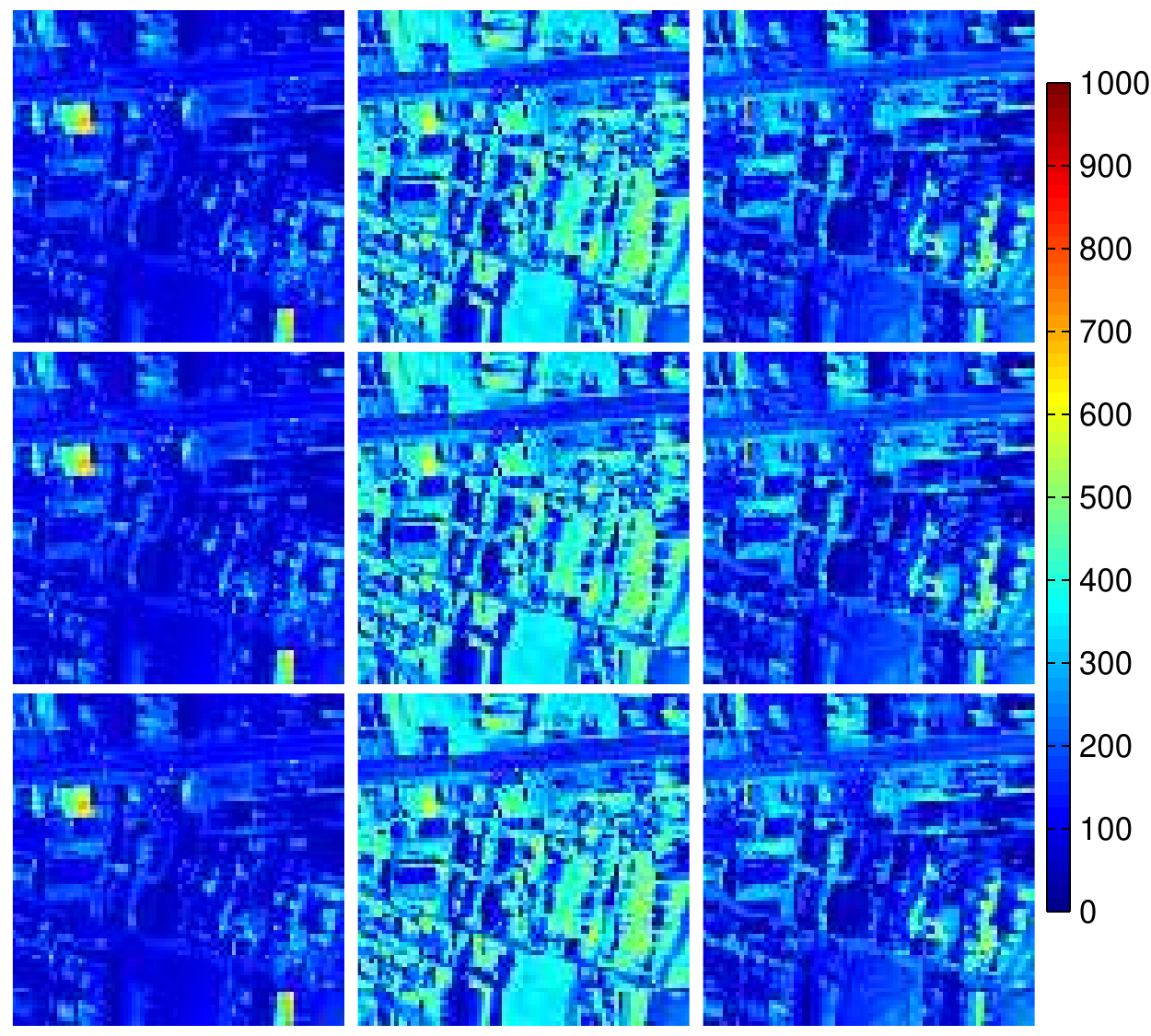}
\end{minipage}}
\end{minipage}
\end{figure}

\subsection{Tests on images}\label{subsec:imagetest}
Despite that natural image recovery from incomplete random samples is not a typical image processing task, we picked it to  test algorithm \ref{alg:ADM}, LMaFit, and FPCA since it is easy to visualize their solution qualities. This simulation used two grayscale images, the $768\times 1024$ Kittens and the $1200\times 1600$ Panda, shown in figure \ref{fig:cat-panda-orig}.

\begin{figure}[htbp]
\begin{center}
\includegraphics[width=0.45\textwidth]{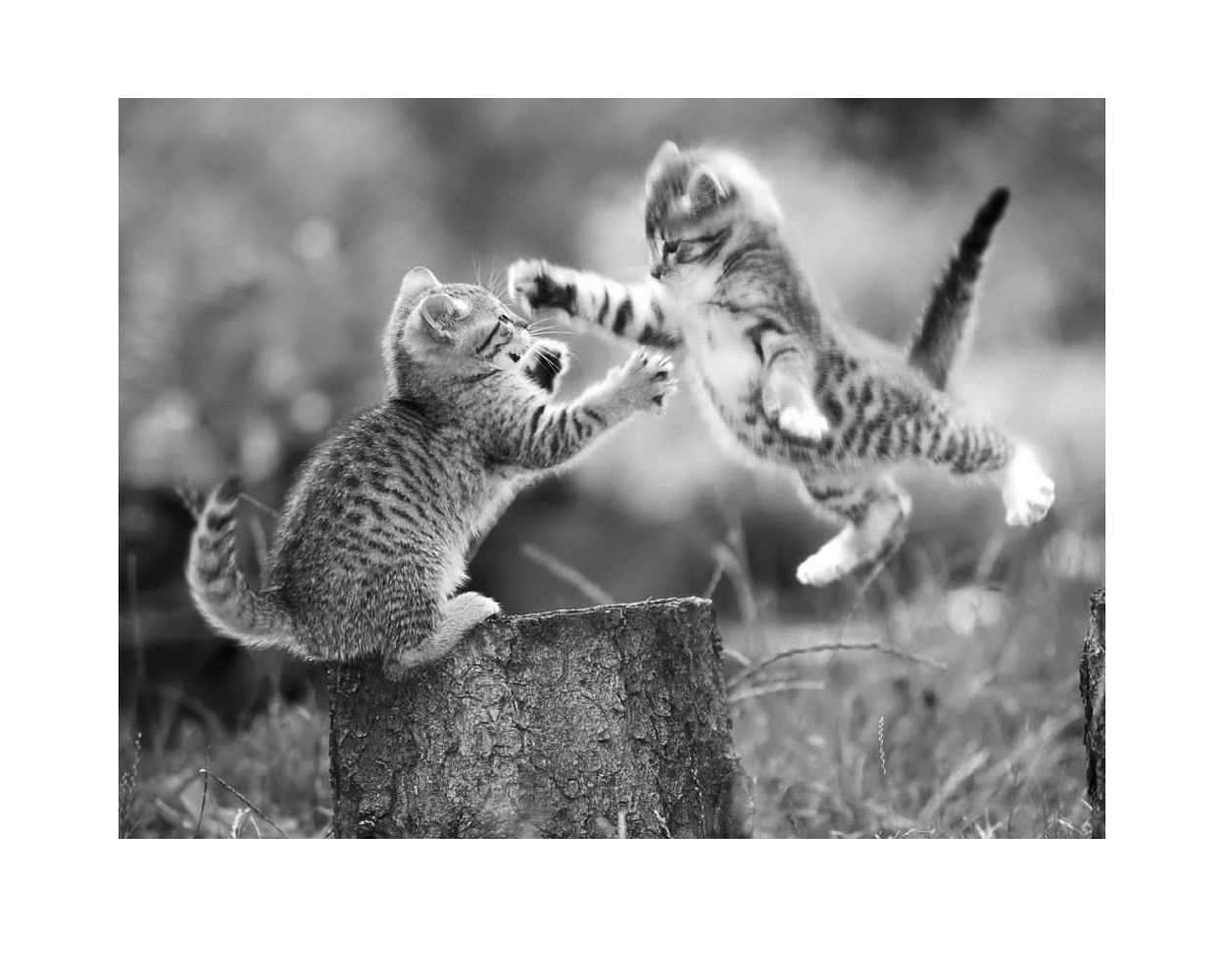}
\includegraphics[width=0.46\textwidth]{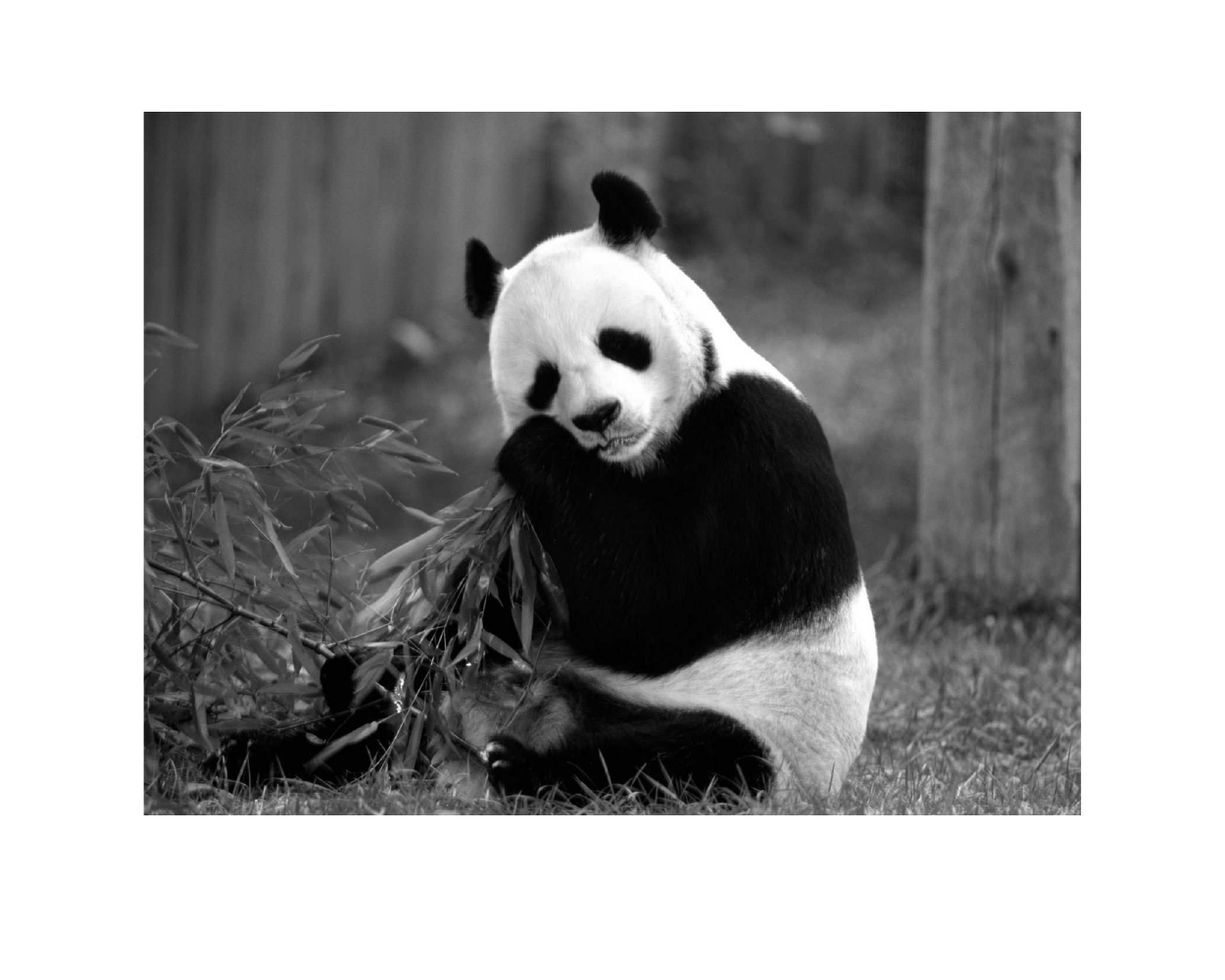}
\caption{Original  images: Kittens (left) and Panda (right)} \label{fig:cat-panda-orig}
\end{center}
\end{figure}

We applied relatively small (thus, challenging) sample rates of SR=10\%, 20\%, 30\% for Kittens and SR=10\%, 15\%, 20\% for Panda. We set tol=$10^{-5}$, and maxiter=2000 for Algorithm \ref{alg:ADM} and LMaFit and  est\_rank=2, rk\_inc =3 for  LMaFit. The parameters for FPCA were set to their default values.
 The results are given in tables \ref{table:cat} and \ref{table:panda} and the recovered images  in figures \ref{fig:cat-rec} and \ref{fig:panda-rec}.

Tables \ref{table:cat} and \ref{table:panda} indicate that FPCA performs slightly better than algorithm \ref{alg:ADM} in terms of recovery quality but slower when SR is as small as  10\% while at this SR, LMaFit performs much worse. With larger SRs such as 20\% and 30\% for Kittens and SR=15\% and 20\% for Panda, algorithm \ref{alg:ADM} is both faster and returns better images than FPCA. With SR=20\% and 30\%, algorithm \ref{alg:ADM} is better than LMaFit on Kittens in terms of recovery quality with comparable speed. However, with SR=20\%, LMaFit becomes slightly faster than algorithm \ref{alg:ADM} on Panda with comparable recovery quality. As SR further increases,  the three algorithms will return images with almost the same quality while  LMaFit is the best in speed.

\begin{table} \caption{Recover Kittens by Algorithm \ref{alg:ADM},
  LMaFit, and FPCA. The rank estimate for Algorithm \ref{alg:ADM} and LMaFit is 40. }\label{table:cat}
    {\scriptsize
 \setlength{\tabcolsep}{2pt}  %\centering
 \begin{center}
\begin{tabular}{|c|ccc|ccc|ccc|} \hline
\multicolumn{1}{|c|}{problem} & \multicolumn{3}{|c|}{Alg \ref{alg:ADM}}  &  \multicolumn{3}{|c|}{LMaFit} &\multicolumn{3}{|c|}{FPCA}   \\ \hline
seed & CPU & PSNR & MSE & CPU & PSNR & MSE & CPU & PSNR & MSE \\ \hline
\multicolumn{10}{|c|}{SR: 10\%}\\ \hline
  3445 &  20.98 & 18.21 & 1.51e-002  &  42.96 & 13.34 & 4.64e-002 &  23.01 & 20.09 & 9.80e-003 \\ \hline 
 31710 &  19.00 & 18.15 & 1.53e-002 &  17.34 & 14.58 & 3.48e-002  &  23.30 & 20.12 & 9.72e-003 \\ \hline 
 43875 &  20.03 & 18.07 & 1.56e-002 &  39.95 & 13.37 & 4.60e-002  &  23.40 & 20.18 & 9.59e-003 \\ \hline 
 69483 &  20.33 & 18.09 & 1.55e-002  &  13.60 & 15.14 & 3.06e-002 &  23.04 & 20.07 & 9.84e-003 \\ \hline 
 95023 &  20.06 & 18.04 & 1.57e-002  &  17.89 & 14.37 & 3.66e-002  &  23.25 & 20.06 & 9.87e-003 \\ \hline 
\hline
\multicolumn{10}{|c|}{SR: 20\%}\\ \hline
  3445 &  12.66 & 23.26 & 4.72e-003  &  11.88 & 21.50 & 7.08e-003  &  36.73 & 22.38 & 5.78e-003 \\ \hline 
 31710 &   9.68 & 23.15 & 4.84e-003  &   9.93 & 21.94 & 6.39e-003 &  33.92 & 22.25 & 5.95e-003 \\ \hline 
 43875 &   9.78 & 23.19 & 4.80e-003  &  13.01 & 21.37 & 7.30e-003 &  34.10 & 22.26 & 5.94e-003 \\ \hline 
 69483 &  10.15 & 23.12 & 4.87e-003  &  27.82 & 19.90 & 1.02e-002  &  34.20 & 22.38 & 5.78e-003 \\ \hline 
 95023 &   9.89 & 23.17 & 4.82e-003  &  10.20 & 21.49 & 7.10e-003 &  34.01 & 22.31 & 5.87e-003 \\ \hline 
\hline
\multicolumn{10}{|c|}{SR: 30\%}\\ \hline
  3445 &   9.63 & 24.53 & 3.53e-003  &   9.77 & 24.11 & 3.88e-003  &  54.65 & 23.44 & 4.53e-003\\ \hline 
 31710 &   7.98 & 24.48 & 3.56e-003  &   8.35 & 24.10 & 3.89e-003 &  54.64 & 23.30 & 4.68e-003 \\ \hline 
 43875 &   9.81 & 24.48 & 3.57e-003  &  12.54 & 24.03 & 3.95e-003 &  55.20 & 23.47 & 4.50e-003 \\ \hline 
 69483 &   7.91 & 24.45 & 3.59e-003  &   5.54 & 23.78 & 4.19e-003  &  54.83 & 23.40 & 4.57e-003 \\ \hline 
 95023 &   8.64 & 24.46 & 3.58e-003  &   9.33 & 24.06 & 3.92e-003  &  54.45 & 23.33 & 4.65e-003 \\ \hline 
\end{tabular}
  \end{center}
  }
 \end{table}

\begin{figure}
\caption{Recovered  $768\times 1024$ Kittens with estimate rank $q=40$  by Algorithm \ref{alg:ADM} (left), LMaFit (middle), and FPCA (right)}
\label{fig:cat-rec}
\begin{center}
\includegraphics[width=1\textwidth]{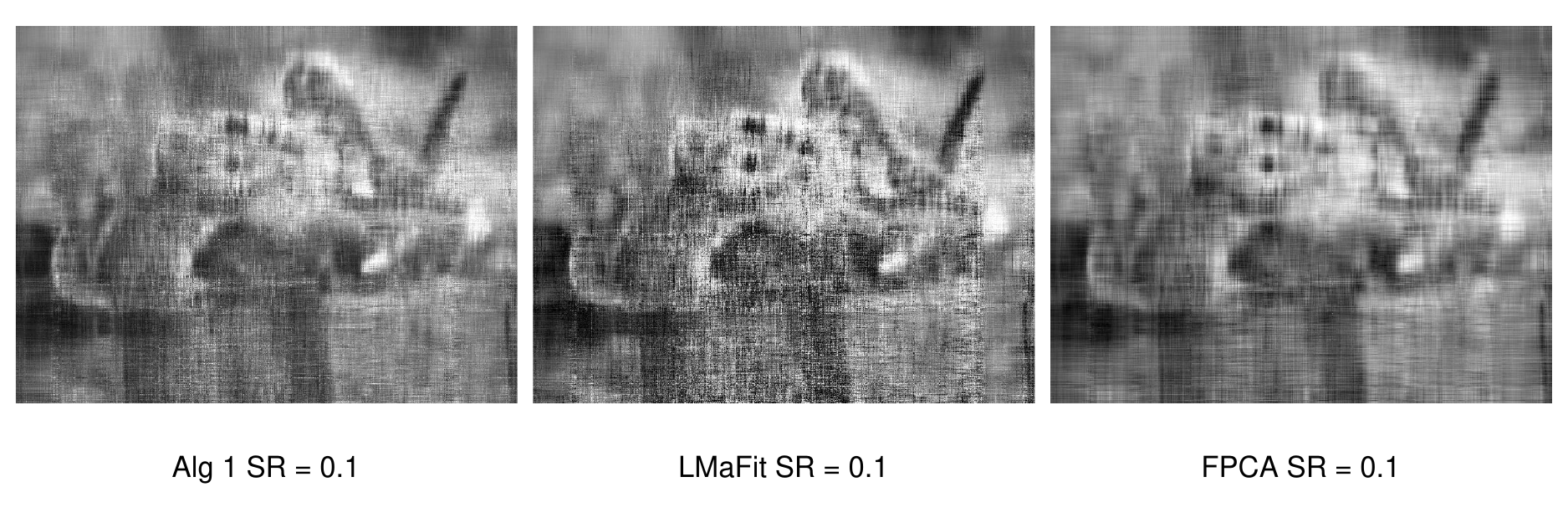}
\includegraphics[width=1\textwidth]{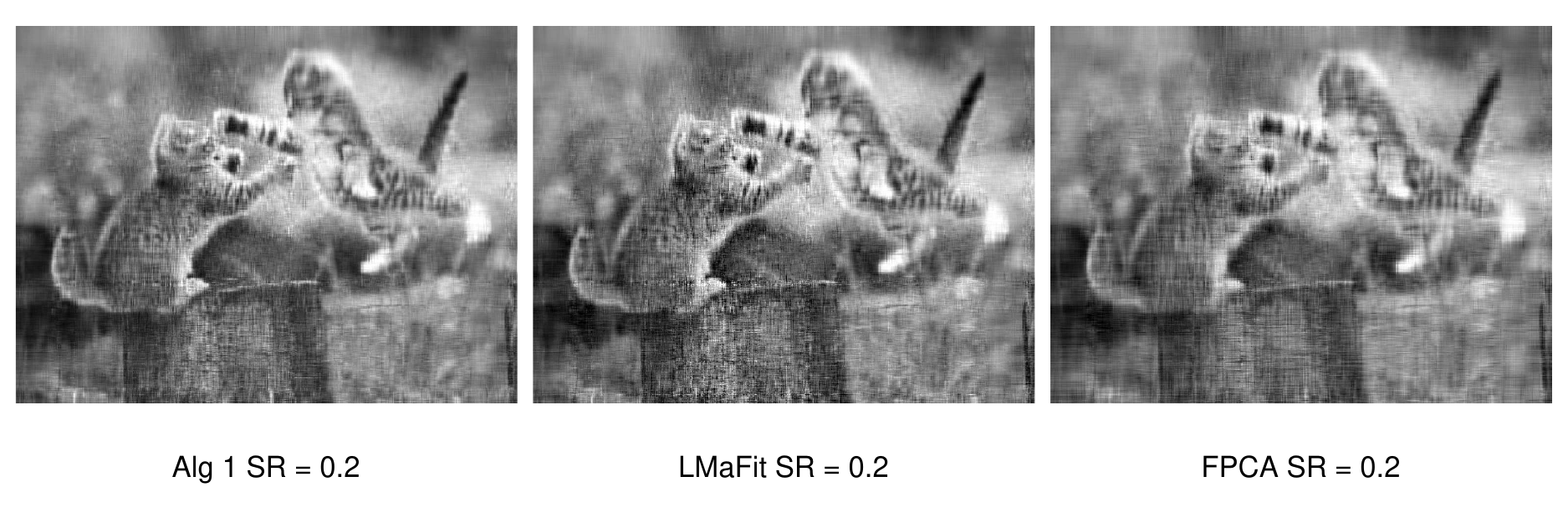}
\includegraphics[width=1\textwidth]{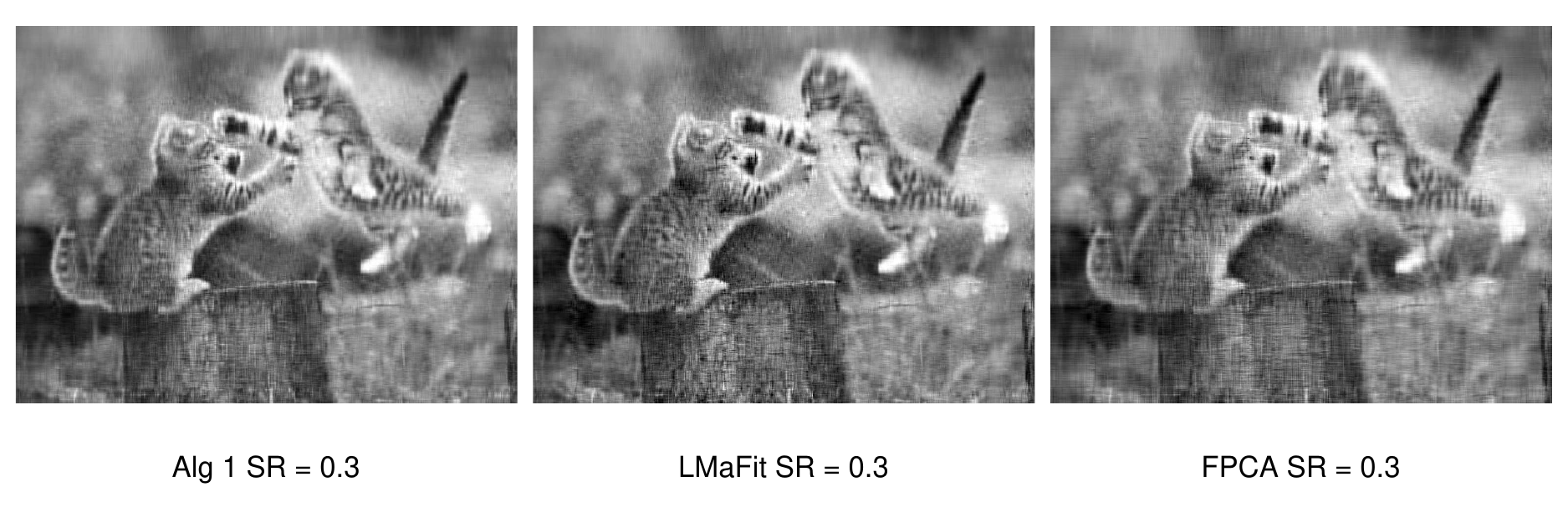}
\end{center}
\end{figure}

\begin{table} \caption{Panda: recovered images by Algorithm \ref{alg:ADM},
  LMaFit, and FPCA. The rank estimate for Algorithm \ref{alg:ADM} and LMaFit is 40. }\label{table:panda}
    {\scriptsize
 \setlength{\tabcolsep}{2pt}  %\centering
 \begin{center}
\begin{tabular}{|c|ccc|ccc|ccc|} \hline
\multicolumn{1}{|c|}{problem} & \multicolumn{3}{|c|}{Alg \ref{alg:ADM}}  &  \multicolumn{3}{|c|}{LMaFit} &\multicolumn{3}{|c|}{FPCA}   \\ \hline
seed & CPU & PSNR & MSE  & CPU & PSNR & MSE  & CPU & PSNR & MSE \\ \hline

\multicolumn{10}{|c|}{SR: 10\%}\\ \hline
  3445 &  45.40 & 23.57 & 4.40e-003  &  56.25 & 18.92 & 1.28e-002 &  59.07 & 23.51 & 4.46e-003 \\ \hline 
 31710 &  48.27 & 23.34 & 4.64e-003  &  66.37 & 18.55 & 1.40e-002  &  59.74 & 23.66 & 4.30e-003\\ \hline 
 43875 &  44.12 & 23.55 & 4.42e-003  &  50.24 & 19.46 & 1.13e-002  &  59.28 & 23.67 & 4.29e-003\\ \hline 
 69483 &  51.63 & 23.77 & 4.20e-003  &  27.67 & 20.77 & 8.37e-003  &  59.06 & 23.63 & 4.34e-003\\ \hline 
 95023 &  57.91 & 23.46 & 4.51e-003  &  33.41 & 19.67 & 1.08e-002 &  59.80 & 23.71 & 4.26e-003 \\ \hline 
\hline
\multicolumn{10}{|c|}{SR: 15\%}\\ \hline
  3445 &  32.97 & 25.83 & 2.61e-003  &  40.07 & 24.79 & 3.32e-003  &  73.32 & 25.26 & 2.98e-003\\ \hline 
 31710 &  29.51 & 25.74 & 2.67e-003  &  37.09 & 24.79 & 3.32e-003  &  72.35 & 25.15 & 3.06e-003\\ \hline 
 43875 &  32.57 & 25.84 & 2.61e-003  &  15.12 & 25.20 & 3.02e-003  &  72.60 & 25.14 & 3.06e-003\\ \hline 
 69483 &  29.34 & 25.80 & 2.63e-003  &  31.99 & 24.84 & 3.28e-003  &  72.85 & 25.13 & 3.07e-003\\ \hline 
 95023 &  26.69 & 25.80 & 2.63e-003  &  17.87 & 25.20 & 3.02e-003  &  71.65 & 25.03 & 3.14e-003\\ \hline 
\hline
\multicolumn{10}{|c|}{SR: 20\%}\\ \hline
  3445 &  28.45 & 26.55 & 2.21e-003  &  30.25 & 26.58 & 2.20e-003  &  91.00 & 25.81 & 2.62e-003\\ \hline 
 31710 &  25.79 & 26.54 & 2.22e-003  &  17.71 & 26.46 & 2.26e-003 &  91.23 & 25.87 & 2.59e-003 \\ \hline 
 43875 &  30.69 & 26.56 & 2.21e-003  &  20.51 & 26.27 & 2.36e-003  &  90.20 & 25.73 & 2.67e-003 \\ \hline 
 69483 &  23.13 & 26.56 & 2.21e-003  &  22.06 & 26.42 & 2.28e-003 &  90.40 & 25.85 & 2.60e-003 \\ \hline 
 95023 &  31.05 & 26.59 & 2.19e-003  &  27.80 & 26.58 & 2.20e-003 &  90.67 & 25.79 & 2.64e-003 \\ \hline 
\end{tabular}
  \end{center}
  }
 \end{table}

\begin{figure}
\caption{Recovered  $1200\times 1600$ Panda for estimate rank $q=40$  by Algorithm \ref{alg:ADM} (left), LMaFit (middle), and FPCA (right)}
\label{fig:panda-rec}
\begin{center}
\includegraphics[width=1\textwidth]{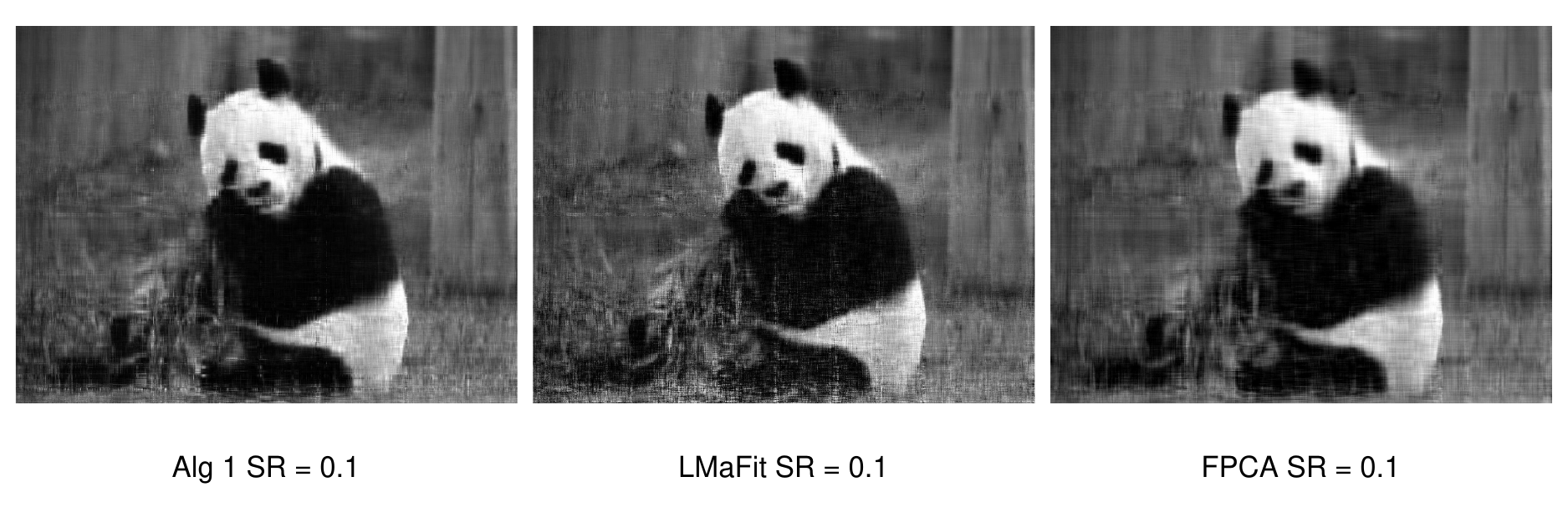}
\includegraphics[width=1\textwidth]{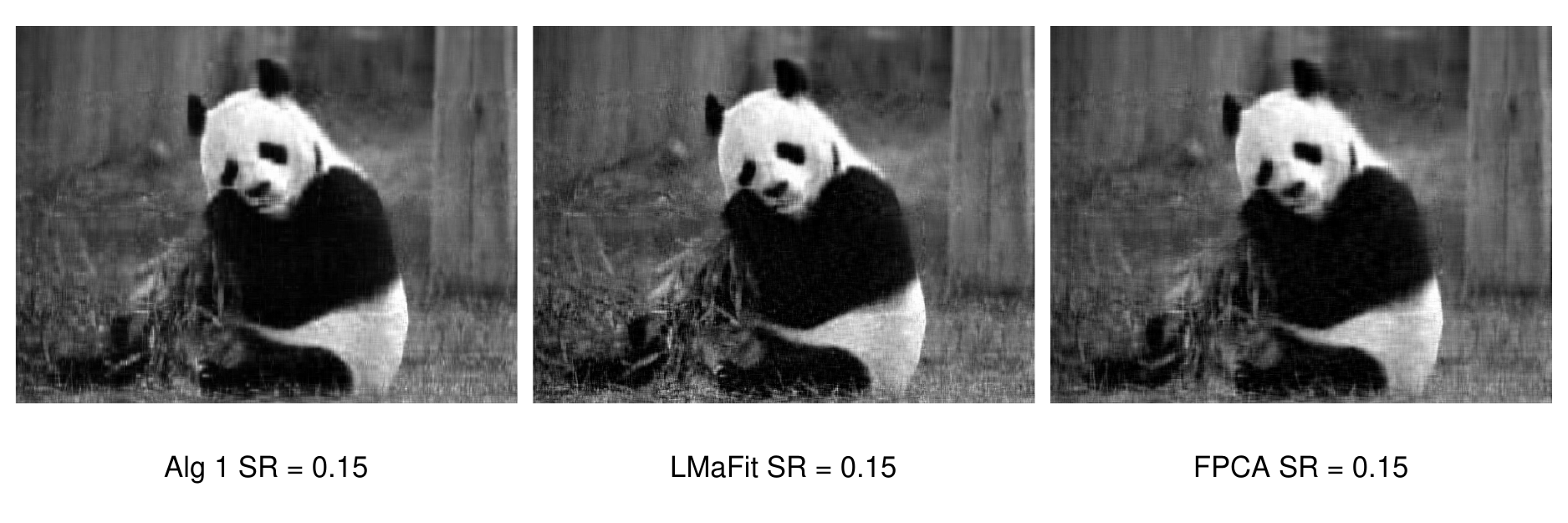}
\includegraphics[width=1\textwidth]{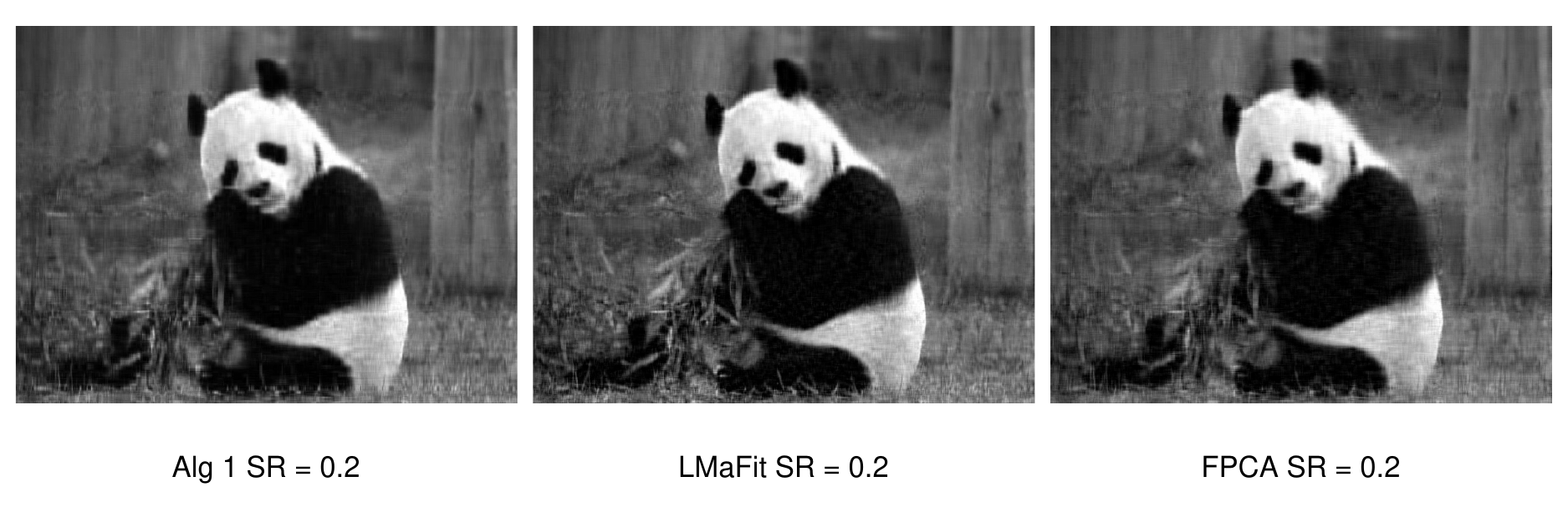}
\end{center}
\end{figure}

\section{Conclusions}
Among wide applications of nonnegative matrix factorization and those of low-rank matrix completion, there is a rich subset of problems where data matrices can be well approximated by matrix factorizations that are both low-rank and nonnegative, while some of the data (matrix elements) are missing.  To best recover missing data, we propose to combine nonnegative matrix factorization and matrix completion, utilizing both nonnegativity and low-rankness in a date recovery formulation.  This paper presents our first attempt to solve this non-convex formulation using an algorithm based on the classic alternating direction augmented Lagrangian method.  The algorithm has a relatively low per-iteration complexity, especially when the approximation rank is low.  Extensive numerical results in this paper indicate that the underlying formulation is useful, and the performance of the alternating direction algorithm is satisfactory.  Since global convergence and recovery guarantee results are still largely unknown, we hope that the results of this paper will also motivate further theoretical and numerical studies on this useful problem.

\section*{Acknowledgements} The work of W.~Yin was supported in part by US NSF CAREER Award
DMS-07-48839, ONR Grant N00014-08-1-1101, ARL and ARO grant W911NF-09-1-0383, and an Alfred P. Sloan Research Fellowship. The work of Z.~Wen was supported in part by NSF DMS-0439872 through
UCLA IPAM. The work of Y.~Zhang was supported in part by NSF DMS-0811188 and
ONR grant N00014-08-1-1101.

%\setcitestyle{numbers}
%\bibsep=0.05cm
%\bibliographystyle{siam}
\bibliographystyle{plain}
\bibliography{optimization,nmfc,inverse}
\end{document}